\documentclass[12pt,a4paper]{article}
\usepackage[paperheight=297mm,paperwidth=210mm,top=20mm,bottom=20mm,right=20mm,left=20mm,heightrounded]{geometry}

\usepackage[utf8]{inputenc}
\usepackage[english]{babel}% включение переносов

%гиперссылки в pdf файле
\usepackage{cmap}  \usepackage[T2A]{fontenc}
\usepackage[pdftex,unicode=true,bookmarks=true]{hyperref}

% кликабельные ссылки!
\usepackage{xr}
\externaldocument[EXT-]{externaldoc}
% внешние кликабельные ссылки

%%%%%%%%%%%%%%%%++++++++================
\usepackage{indentfirst}
% После заголовка абзац с красной строки

%для вставки картинок, в деталях я с этим не разобрался
\usepackage{graphicx}
% Для подписей к картинкам, таблицам и т.п.
%\graphicspath{ {./images/} }
\usepackage[margin=10pt,font=small,labelfont=bf,labelsep=endash]{caption}

\usepackage{floatflt}

% дополнительные математические символы
\usepackage{latexsym}
\usepackage{amsfonts}
\usepackage{amsmath}
\usepackage{amssymb}

\usepackage{cancel} % зачёркивания в формулах

\usepackage{float}

\usepackage{amsthm}
\usepackage{wrapfig}

\newtheorem*{definition}{Definition}
\newtheorem{lemma}{Lemma}

\makeatletter
\def\th@plain{%
	\thm@notefont{}% same as heading font
	\itshape % body font
}
\def\th@definition{%
	\thm@notefont{}% same as heading font
	\normalfont % body font
}
\makeatother

\frenchspacing
% не увеличивать пробелы после знаков препинания

% мои команды
\newcommand{\I}{\mathrm{i}}
\newcommand{\E}{\mathrm{e}}
\newcommand{\md}[1]{~~(\mathrm{mod}~#1)}
%% Свои команды.
\newcommand{\be}{\begin{equation}}
\newcommand{\ee}{\end{equation}}
\newcommand{\bean}{\begin{eqnarray*}}
\newcommand{\eean}{\end{eqnarray*}}

\newcommand{\bea}{\begin{eqnarray}}
\newcommand{\eea}{\end{eqnarray}}

%определяем самодельные команды
%\newcommand{\sgn}{\operatorname{sgn}}
%\newcommand{\sign}{\operatorname{sgn}}
%\newcommand{\diag}{\operatorname{diag}}
%\newcommand{\supp}{\operatorname{supp}}
%\newcommand{\const}{{\rm const}}
%\newcommand{\tr}{\operatorname{tr}}
%\newcommand{\rank}{\operatorname{rank}}
%\newcommand{\Div}{\operatorname{div}}
%\newcommand{\grad}{\operatorname{grad}}
%\newcommand{\rot}{\operatorname{rot}}

%функции Наумова
\newcommand{\zrf}{\operatorname{zrf}}
\newcommand{\lattice}{\operatorname{lattice}}
\newcommand{\ncells}{\operatorname{ncells}}

% Подключаемый шрифт, ПРЯМОЙ ГРЕЧЕСКИЙ
% Обеспечивает доступность семейства \verb|U/eur/m/n| (greek)
\DeclareSymbolFont{greek}{U}{eur}{m}{n}
%Можно определить полужирное начертание
\SetSymbolFont{greek}{bold}{U}{eur}{b}{n}
%Переопределение нужных команд набора греческих символов

\DeclareMathSymbol{\GGamma}{\mathord}{greek}{"00}
\DeclareMathSymbol{\DDelta}{\mathord}{greek}{"01}
\DeclareMathSymbol{\TTheta}{\mathord}{greek}{"02}
\DeclareMathSymbol{\LLambda}{\mathord}{greek}{"03}
\DeclareMathSymbol{\XXi}{\mathord}{greek}{"04}
\DeclareMathSymbol{\PPi}{\mathord}{greek}{"05}
\DeclareMathSymbol{\SSigma}{\mathord}{greek}{"06}
\DeclareMathSymbol{\UUpsilon}{\mathord}{greek}{"07}
\DeclareMathSymbol{\PPhi}{\mathord}{greek}{"08}
\DeclareMathSymbol{\OOmega}{\mathord}{greek}{"09}
\DeclareMathSymbol{\OOmega}{\mathord}{greek}{"0A}

\DeclareMathSymbol{\aalpha}{\mathord}{greek}{"0B}
\DeclareMathSymbol{\bbeta}{\mathord}{greek}{"0C}
\DeclareMathSymbol{\ggamma}{\mathord}{greek}{"0D}
\DeclareMathSymbol{\ddelta}{\mathord}{greek}{"0E}
\DeclareMathSymbol{\eepsilon}{\mathord}{greek}{"0F}
\DeclareMathSymbol{\zzeta}{\mathord}{greek}{"10}
\DeclareMathSymbol{\eeta}{\mathord}{greek}{"11}
\DeclareMathSymbol{\ttheta}{\mathord}{greek}{"12}
\DeclareMathSymbol{\iiota}{\mathord}{greek}{"13}
\DeclareMathSymbol{\kkappa}{\mathord}{greek}{"14}
\DeclareMathSymbol{\llambda}{\mathord}{greek}{"15}
\DeclareMathSymbol{\mmu}{\mathord}{greek}{"16}
\DeclareMathSymbol{\nnu}{\mathord}{greek}{"17}
\DeclareMathSymbol{\xxi}{\mathord}{greek}{"18}
\DeclareMathSymbol{\ppi}{\mathord}{greek}{"19}
\DeclareMathSymbol{\rrho}{\mathord}{greek}{"1A}
\DeclareMathSymbol{\ssigma}{\mathord}{greek}{"1B}
\DeclareMathSymbol{\ttau}{\mathord}{greek}{"1C}
\DeclareMathSymbol{\uupsilon}{\mathord}{greek}{"1D}
\DeclareMathSymbol{\pphi}{\mathord}{greek}{"1E}
\DeclareMathSymbol{\cchi}{\mathord}{greek}{"1F}
\DeclareMathSymbol{\ppsi}{\mathord}{greek}{"20}
\DeclareMathSymbol{\oomega}{\mathord}{greek}{"21}
\DeclareMathSymbol{\vvarepsilon}{\mathord}{greek}{"22}
\DeclareMathSymbol{\vvartheta}{\mathord}{greek}{"23}
\DeclareMathSymbol{\vvarpi}{\mathord}{greek}{"24}

% разрешение лишних пробелов для выравнивания краев
\sloppy
\hyphenpenalty=9999

\title{Number-theory renormalization of vacuum energy}
\author{M.G. Ivanov\thanks{\tt \href{mailto:ivanov.mg@mipt.ru}{ivanov.mg@mipt.ru}}, 
	V.A. Dudchenko\thanks{\tt \href{mailto:dudchenko.va@phystech.edu}{dudchenko.va@phystech.edu}}, 
	V.V. Naumov\thanks{\tt \href{mailto:naumov.vv@phystech.edu}{\textbf{naumov.vv@phystech.edu}}}
	\\ ${}^{*\ddagger}$ Moscow Institute of Physics and Technology
	\\ ${}^\dagger$ Vernadsky Institute of Geochemistry and Analytical Chemistry\\ of the Russian Academy of Sciences}

\date{July 21, 2023}

\pagestyle{plain}

\begin{document}
\maketitle

\begin{abstract}
For QFT on a lattice of dimension $d\geqslant 3$, the vacuum energy (both bosonic and fermionic) is zero if
the Hamiltonian is a function of the square of the momentum, and the calculation of the vacuum energy is performed in the ring  of residue classes modulo $N$. 
This fact is related to a problem from number theory about the number of ways to represent a number as a sum of $d$ squares in the ring  of residue classes modulo $N$.
\end{abstract}

\tableofcontents

\newpage

\section{Introduction}

In most models of quantum field theory (QFT), the problem of renormalization of the vacuum energy \cite{vac-en} arises.
We assume that the problem may be related to the uncritical use of real numbers and operations on them in QFT. 
The possibility of using other numerical systems in QFT was raised in the monograph \cite{p-adic-vvz}, in which $p$-adic numbers from the methods of number theory were transferred to mathematical physics.
The role of number theory in physics was discussed in the paper \cite{volovich}.

We are considering QFT on the lattice. 
But we do not consider the transition from a continuous space to a lattice using difference schemes.
Instead, we build a theory on a lattice using arithmetic operations in the ring $\mathbb{Z}(N)$ of residue classes modulo $N$, assuming that such arithmetic is native for this lattice.
In this approach the renormalization of the vacuum energy occurs naturally for a wide class of models under consideration.

Initially, the problem was set (M.I.) to study the vacuum energy for the dispersion relation $E(\mathbf{p})=\sqrt{\mathbf{p}^2+m^2}$ with different definitions of square root in the ring $\mathbb{Z}(N)$.
Numerical calculation (V.N.) showed that, regardless of the method of determining the lattice analogue of the positive branch of the square root, the vacuum energy is zero at any $N$ with the dimension of the space $d\geqslant3$.
After that, the multiplicities of various values of $\mathbf{p}^2$ in the ring $\mathbb{Z}(N)$ were numerically calculated depending on $N$ and the dimension of the space $d$ (V.N), it turned out that all multiplicities are divisible by $N$, for arbitrary $N$ if $d\geqslant3$, and also if $N=2^n$ and $d\geqslant2$. This means that all multiplicities of different values of $\mathbf{p}^2$ are zero modulo $N$, and all values of $\mathbf{p}^2\in\mathbb{Z}(N)$ contribute zero to the vacuum energy.
The corresponding theorem is formulated and proved in this paper.

\section{Theorem on number-theory renormalization}

We consider a bosonic quantum field theory on a lattice $\mathbb{Z}^d(N)$ with a Hamiltonian of the form
\bea\label{H_b}
&&\hat H_{b}=\sum_{\mathbf{p}\in\mathbb{Z}^d(N),\mathbf{p}^2\in D}E(\mathbf{p}^2)\,\left(2\hat b_{\mathbf{p}}^\dagger\hat b_{\mathbf{p}}+1\right),\\
\nonumber
&&
\mathbf{p}^2=\sum_{k=1}^d p_k^2\in\mathbb{Z}(N),\quad
E:D\to\mathbb{Z}(N),\quad 
D\subset\mathbb{Z}(N),\quad
[\hat b_{\mathbf{p_1}},\hat b_{\mathbf{p_2}}^\dagger]=\ddelta_{\mathbf{p_1}\mathbf{p_2}}\,\hat 1.
\eea
We believe permissible $\mathbf{p}\in\mathbb{Z}^d(N)$ iff $\mathbf{p}^2\in D$.

Hereafter $\mathbb{Z}(N)$ is the ring  of residue classes modulo $N$. 
Usually we will use a representation of the form $\mathbb{Z}(N)=\{0,1,2,\dots,N-1\}$.

We will need the operation of \textit{division with remainder}:
\be\label{def-mod}
  a\mod N \text{~~is the remainder of the division of $a$ by $N$}.
\ee
For example
$$
  7\mod5=2,\quad -3\mod5=2.
$$

We will also need \textit{congruence modulo} $N$:
\be\label{def-md}
  a\equiv b\md N~~\Leftrightarrow~~a\mod N=b\mod N\text{~~means $a$ and $b$ are congruent modulo $N$}.
\ee

In the formulating of the results (with a fixed $N$), we can avoid division with remainder and congruence, simply assuming that all actions are performed in the ring $\mathbb{Z}(N)$.
These operations are useful in constructing a proof by mathematical induction with different values of $N$.

The vacuum energy is the sum of the zero oscillation energies for all permissible values of the momentum and has the form
$$
  \mathcal{E}_{vac~b}
  =\mathcal{E}
  =\sum_{\mathbf{p}\in\mathbb{Z}^d(N),\mathbf{p}^2\in D}
   E(\mathbf{p}^2)\in\mathbb{Z}(N).
$$

Let us also consider a fermionic quantum field theory on the lattice $\mathbb{Z}^d(N)$ with a Hamiltonian of the form
\bea\label{H_f}
&&\hat H_{f}=\sum_{\mathbf{p}\in\mathbb{Z}^d(N),\mathbf{p}^2\in D}E(\mathbf{p}^2)\,\left(\hat a_{\mathbf{p}+}^\dagger\hat a_{\mathbf{p}+}-\hat a_{\mathbf{p}-}^\dagger\hat a_{\mathbf{p}-}\right),\\
\nonumber
&&
\mathbf{p}^2=\sum_{k=1}^d p_k^2\in\mathbb{Z}(N),\quad
E:D\to\mathbb{Z}(N),\quad 
D\subset\mathbb{Z}(N),\quad
[\hat a_{\mathbf{p_1}\sigma_1},\hat a_{\mathbf{p_2}\sigma_2}^\dagger]_+
=\ddelta_{\mathbf{p_1}\mathbf{p_2}}\,
\ddelta_{\sigma_1\sigma_2}\,\hat 1.
\eea
The vacuum energy is the sum of the negative energies of all the fermions of the Dirac Sea (for all permissible values of momentum) and has the form
\be
  \mathcal{E}_{vac~f}
  =-\mathcal{E}
  =-\sum_{\mathbf{p}\in \mathbb{Z}^d(N),\mathbf{p}^2\in D}
    E(\mathbf{p}^2)\in\mathbb{Z}(N).
\ee

For both bosons and fermions, we have limited ourselves to the case of one polarization, the case of an arbitrary number of polarizations can be considered completely similarly.

Since $\mathbf{p}^2\in\mathbb{Z}(N)$, the value of $\mathcal{E}$ can be calculated as follows
\be
  \mathcal{E}=\sum_{k\in D\subset\mathbb{Z}(N)}c(k)\,E(k)\in\mathbb{Z}(N),
\ee
The multiplicity $c(k)$ is
\be\label{c(k)}
  c(k)=c_{Nd}(k)=\Big(\text{the number of nodes $\mathbf{p}$ of the lattice }\mathbb{Z}^d(N)\text{ such that }
  \mathbf{p}^2\equiv k \md N\Big).
\ee
~\\

\textbf{Theorem.} For an arbitrary $N$ with $d\geqslant 3$, and for $N=2^n$ with $d\geqslant2$
\be\label{threv}
   \forall k\in\mathbb{Z}(N)\quad c_{Nd}(k)\equiv0 \md N.
\ee
The proof of the Theorem is given in the Appendix.

Note that if the theorem is proved for some $N$ and $d$, then for a given $N$ for all $d_1>d$, the theorem also holds.
~\\

When the conditions of the Theorem are fulfilled (for an arbitrary $N$ at $d\geqslant 3$, and for $N=2^n$ at $d\geqslant 2$) for an arbitrary function $E:D\to\mathbb {Z}(N)$ the vacuum energy calculated in the ring of residue classes  $\mathbb{Z}(N)$ is zero:
\be
  \mathcal{E}\equiv 0\md N.
\ee

The Hamiltonians \eqref{H_b} and \eqref{H_f}, for which the vacuum energy renormalization obtained, look like Hamiltonians for free fields. 
However, they can be considered as diagonalized Hamiltonians of the theory with interaction.
In this case, the function $E(\mathbf{p}^2)$ can be of arbitrary form, for example: $\mathbf{p}^2$ (field of non-relativistic particles),
$\sqrt{\mathbf{p}^2+m^2}$ (the field of relativistic particles), or even $(\mathbf{p}^2)^2$. 
If the theory is invariant with respect to shifts, then the momentum $\mathbf{p}$ will number possible excitations. If in the limit of large $N$ the theory becomes isotropic, the excitation energy can depend only on $\mathbf{p}^2$. Thus, vacuum energy renormalization takes place for a wide class of physically meaningful Hamiltonians.

More general Hamiltonian with zero vaccum energy in field $\mathbb{Z}(N)$ has the form
\be\label{general-H}
\hat H_g=\sum_{\mathbf{p}\in\mathbb{Z}^d(N),\mathbf{p}^2\in D}\hat h_\mathbf{p},\qquad
d\geqslant 3\text{ or }d=2,~N=2^2,
\ee
where the spectra of Hamiltonians $\hat h_\mathbf{p}$ has the form
\be
  E: (\mathbf{p}^2,\sigma)\mapsto E(\mathbf{p}^2,\sigma)\in\mathbb{Z}(N),
\ee
$\sigma\in\Sigma(k)$ is multi-index, which numerate energy levels for fixed $\mathbf{p}^2=k\in D\subset\mathbb{Z}(N)$.
To define vacuum energy field one has to choose an arbitrary linear order in the ring $\mathbb{Z}(N)$.
\be
  \mathcal{E}_{vac}=\sum_{k\in D\subset\mathbb{Z}(N)}c(k)\sum_{\sigma_0\in\Sigma_0(k)\subset\Sigma(k)}E(k,\sigma_0)
  \equiv0\md{N},
\ee
where $\Sigma_0(k)$ is defined by following condition
$$
E(k,\sigma_0)|_{\sigma_0\in\Sigma_0(k)}=\min_{\sigma\in\Sigma(k)} E(k,\sigma),\text{ or }
E(k,\sigma_0)|_{\sigma_0\in\Sigma_0(k)}<0.
$$

The Hamiltonians of  the form \eqref{general-H} describe a wide range of interacting lattice QFTs.

\section{Conclusion}

In comparison with the standard lattice QFT, what is new in our approach is that not only the momentum components, but also the energy run through discrete values on the lattice, and both the momentum projections and the energy are considered as elements of the ring  of residue classes $\mathbb{Z}(N)$. This automatically implies that spatial coordinates and time take values on the inverse lattice $\Delta x\cdot\mathbb{Z}(N)$ ($x\sim\Delta x\cdot N$).
The quantity that we call the momentum (the spatial shift generator on the lattice) in physics is often called \textit{quasi-momentum} (the quasimomentum coincides with the momentum in the continuous limit). Similarly, the quantity that we call energy (the generator of the time shift on the lattice) can be called \textit{quasi-energy}.

If we assume that <<Time is that which is measured by a clock.>> \cite{Bondi}, then in our model the clock is such that the time is given by a finite number of $q$-nary digits. After the clock counts down the maximum possible time for a given number of digits, the countdown begins from the beginning.

Since on a three-dimensional lattice, the quasi-energy of the vacuum is renormalized to zero for an arbitrary lattice size, the question arises whether it is possible to use the limit transition $N\to\infty$ to move from the lattice QFT to the QFT in continuous space. Is it possible to transfer the constructed number-theoretic renormalization to the continuous case?
There are reasons to believe that such a transfer is possible. In a series of papers \cite{bin-ivanov}, \cite{ter-bin}, \cite{digits}, the representation of the coordinate and momentum using a sequence of digits in an arbitrary number system was considered. Renormalization on the lattice was described as a change in the representation of $\mathbb{Z}(N)$: the transition from the representation of $\{0,1,\dots,N-1\}$ to the representation of $\{-n,-n+1,\dots,0,1,\dots,N-n-1\}$.
This lattice renormalization gives natural generalizations for the continuous limit at which the non-renormalized series passes into the renormalized one. The simplest of such generalizations (on a lattice it only works for $q=2$) is
$$
  x=\sum_{s=-\infty}^{+\infty} \mathbf{d}(s,x)\,q^s
  \quad\longrightarrow\quad
  x'=\frac1{q-1}\sum_{s=-\infty}^{+\infty} [\mathbf{d}(s-1,x)-\mathbf{d}(s,x)]\,q^s.
$$
Here $\mathbf{d}(s,x)$ is the digit in the position $s$ of the number $x$ in the $q$-nary number system.
The digit is assumed to be a periodic function of $x$
$$
  \mathbf{d}(s,x)=\mathbf{d}(s,x+q^{s+1}).
$$
The non-renormalized series for $x$ for some numeral systems may diverge for negative numbers (if numbers from the set $0,1,\dots,q-1$ are used) or for positive numbers (if numbers from the set $0,-1,\dots,-q+1$ are used) or for all numbers (if a set of digits not containing 0 is used).
The renormalized series converges in all these cases.

We are not yet ready to analytically calculate the individual digits of the $q$-nary expansion of integrals for vacuum energy, for this reason we started with the study of finite lattices.

The fact that the quasi-energy of the vacuum turned out to be zero for any lattice size, as well as the fact that the number-theoretic renormalization was previously determined not only for the finite lattice, but also on the real line, seems to be a weighty argument in favor of the physical meaningfulness of the transition to the continuous limit $N\to\infty$.

The direct calculation of the number-theoretic renormalization in the continuous case may be facilitated by the previously found \cite{bin-ivanov}, \cite{ter-bin} integral representation of real numbers and number-theoretic renormalizations in such a representation:
$$
x=\int\limits_{s=-\infty}^{+\infty} \mathbf{d}(s,x)\,q^s\,ds
\quad\longrightarrow\quad
x'=\frac1{q-1}\int\limits_{s=-\infty}^{+\infty} [\mathbf{d}(s-1,x)-\mathbf{d}(s,x)]\,q^s\,ds.
$$
Here the digit position $s$ runs through all real (including fractional) values
$$
  \mathbf{d}(s,x)=\mathbf{d}(0,x\cdot q^{-s}).
$$
The other form of number-theoretic renormalization for the integral representation is obtained by integration in parts with omission of boundary terms:
$$
x=\int\limits_{s=-\infty}^{+\infty} \mathbf{d}(s,x)\,q^s\,ds
\quad\longrightarrow\quad
x''=-\frac1{\ln q}\int\limits_{s=-\infty}^{+\infty} \frac{d(\mathbf{d}(s,x))}{ds}\,q^s\,ds.
$$
The function $\mathbf{d}(s,x)$ experiences jumps (it is piecewise constant between jumps), so the derivative
$\frac{d(\mathbf{d}(s,x))}{ds}$ is understood in the sense of generalized functions.

The integral representation of real numbers is overdetermined (it is enough to know any series of values of $\mathbf{d}(s,x)$ with a unit step of $s$), but there are no special scales in the integral representation (in the representation in the form of a series of $s\in\mathbb{Z}$, scales of the form $q^s$ are special).

Above we discussed the possibility of switching from $\mathbb{Z}(N)$ to real numbers, however, we should also consider the possibility of switching to $p$-adic numbers \cite{p-adic-vvz}. If the lattice size is set as $N=p^m$, then at $m\to\infty$ the vacuum energy tends to zero in the $p$-adic sense. This point of view on the renormalization of vacuum energy also needs further investigation.

The problems under study are also interesting from a number-theoretic point of view.
In the course of the research, one of us (V.N.) formulated and numerically tested the following hypothesis:

\textit{1) For arbitrary integers $m\geqslant0$ and $N\geqslant1$, the number of ways in which an arbitrary element $X$ of the ring $\mathbb{Z}(N)$ is represented as the sum of $d$ terms of degree $m$
$$
  X\equiv x_1^m+\dots+x_d^m \md{N}
$$ 
is always a multiple of $N$ if $d\geqslant m+1$.}

\textit{2) If $d<m+1$, then there is such a $N$, and such a $X\in\mathbb{Z}(N)$, that the number of ways in which $X$ is represented as the sum of $d$ terms of degree $m$ is not divisible by $N$.
}

The hypothesis is obvious for $m=0$ and $m=1$. For $m=2$, the hypothesis follows from the Theorem in this paper. For higher degrees, the hypothesis has so far been tested numerically (V.N.).

The first statement of the hypothesis is verified numerically for all cases
\begin{itemize}
	\item $m\leqslant8$, $N\leqslant1000$;
	\item $m\leqslant35$, $N\leqslant300$;
	\item $m\leqslant100$, $N\leqslant37$.
\end{itemize}
The second statement of the hypothesis is verified for all $m\leqslant 100$.

\subsection*{Acknowledgement}

The authors thank the participants of the seminar of the Department of Mathematical Physics of Steklov Mathematical Institute of Russian Academy of Sciences and
the seminar of the Laboratory of Infinite Dimensional Analysis and Mathematical Physics of the Faculty of Mechanics and Mathematics of Moscow State University
for valuable comments and discussion. 
We are especially grateful for the fruitful and friendly discussion
of I.V. Volovich, E.I. Zelenov, N.N. Shamarov, S.L. Ogarkov, Z.V. Khaydukov, D.I. Korotkov, V.V. Dotsenko.

% командна \section* делает новый раздел section без номера, в "Содержание" такой раздел не попадает
\section*{Appendix}
% если мы хотим поместить такой раздел без номера в "Содержание", то можем сделать это так
\addcontentsline{toc}{section}{Appendix}

\appendix
\section{Quadratic congruences} \label{AppA}
\subsection{Quadratic congruences to prime modulus} 

Simplest quadratic congruence 
$$x^2 \equiv 0 \md{p}$$
in field $\mathbb{Z}(p)$ means that square root of 0 exists in $\mathbb{Z}(p)$ and defined uniquely.

Quadratic congruence  
\begin{equation}\label{mod}
	x^2 \equiv a \md{p}, \qquad\text{ in case } {a}\not\equiv0 \md{p}
\end{equation}
has solutions for some $a\in \mathbb{Z}(p)$, and such $a$ is said to be a quadratic residue. Note that the
trivial case  $a=0$ is generally excluded from set of quadratic residues $\text{Res}(p)= \{ 1, r,s,\dots\} $, which are congruent to integers 
$$
1^2, 2^2, ..., \left( \frac{p-1}{2}\right)^2. 
$$
The number of quadratic residues is exactly $\frac{p-1}{2}$ because  if we take different $x_1$ and $x_2$
$$
x_1^2\equiv x_2^2\equiv a\not\equiv0 \md{p},
$$
then, since  $\mathbb{Z}(p)$ is field, we have
$$
(x_1/x_2)^2\equiv 1 \md{p}\quad\Rightarrow\quad x_1=\pm x_2.
$$
If the congruence  (\ref{mod}) does not have a solution, then $a$ is said to be a quadratic nonresidue.

The Legendre symbol is defined for integers $ a \in{\mathbb{Z}} $:
\begin{equation}\label{chi}
	\chi(a) \stackrel{\text{def}}{=}%\equiv
	\quad\left(\frac{a}{p}\right)_L  
	\stackrel{\text{def}}{=} 
	\left\{ \begin{aligned}
		& \quad 0 \quad\text{ if $a$ is divisible by } p,\\
		& \quad 1 \quad\text{ if $a$ is quadratic residue modulo } p,\\
		& -1 \quad\,\text{if $a$ is quadratic nonresidue}.\\	
	\end{aligned}\right.  
\end{equation}

Legendre symbol is group theoretical character and we denote it like  $\chi(a)$ for $p$ fixed.
From Fermat's and Legendre' theorems it follows that
\begin{equation}\label{chia}
	\forall a : \qquad \chi(a) \equiv a^{\frac{p-1}{2}} \md p. 
\end{equation} 
Using this congruence one can simply evaluate Legendre symbol of $-1$  by formula
$$ 
\chi(-1) = (-1)^{\frac{p-1}{2}}.
$$
Also characters' sum  over all elements of $\mathbb{Z}(p)$ is equal to $0$:
\begin{equation}\label{sumchi}
	\sum_a \chi(a)  =0. 
\end{equation}
Hence (in one dimension and $N=p$) the multiplicity %$c_{N\,d}(k)$ 
(\ref{c(k)}) as number of solutions of congruence (\ref{mod}) is of the form
$$
c%_{p\;1}
(k)=1+ \chi(k),\quad
k\in\mathbb{Z}(p).
$$

\newpage
\subsection{Quadratic congruence to $p^m$ modulus}
We can consider all solutions $x$ of  congruence with parameter $a$
\begin{equation}\label{modpn}
	x^2\equiv a \md{p^m},\quad 
	x,\, a \in \mathbb{Z}(p^m).
\end{equation}
Any solution of (\ref{modpn}) will be also the solution of similar congruence with  less by 1 power of $p$:
\begin{equation}\label{modpnn1}
	\text{if }\acute{x}:\; \acute{x}^2\equiv a\md{p^m} \text{ then }
	\;\Rightarrow\; 
	\acute{x}^2\equiv a\md{p^{m-1}} \;\Rightarrow\;\dots\Rightarrow\;
	\acute{x}^2\equiv a\md{p^1}.
\end{equation}
Notation of $x$ expansion in powers of $p^\alpha$ and digits $x_\alpha$  looks like:
%Удобно представить неизвестное $x$ и параметр $a$ в системе исчисления с основанием $p$:
\begin{equation} \label{x&a}
	\begin{aligned}
		x  &= x_0 +x_1\, p + x_2\, p^2 + \cdots + x_\alpha p^\alpha + \cdots +x_{m-1}\, p^{m-1},\quad 
		x_\alpha \in \mathbb{Z}(p) ;\\
		a  &= a_0 +a_1\, p + a_2\, p^2 + \cdots + a_\alpha p^\alpha + \cdots  +a_{m-1}\, p^{m-1},\quad  
		a_\alpha \in \mathbb{Z}(p).\\
	\end{aligned} 
\end{equation}
First we explain the case when $a$ relatively prime to $p$. 

\paragraph*{For $a$ with digit $a_0\neq0$.} 
The greatest common divisor of $a$ and $p$ in this case is 1  
$$
(a,p)=1.
$$
From (\ref{modpnn1}) and (\ref{x&a}) one can immediately find  congruence for digit $x_0$
\begin{equation}\label{modpn1}	
	%x^2\equiv a\md{p^1} \Rightarrow\quad  
	x_0^2 \equiv a_0 \md{p^1},
\end{equation}
it looks like (\ref{mod}). If $a_0$ is a quadratic nonresidue there are no solutions of (\ref{modpn}). % for any $m$. 
So one-dimensional multiplicity for $N=p^m$
\begin{equation*}
	c%_{p^m\; 1}
	(a)=0 \quad
	\text{if $(a,p)=1$ and digit $a_0 \not\in\text{Res}(p)$.}
\end{equation*}

We go further with $a_0$ as quadratic residue,%to modulo $p$
so $\; x_0=\pm\sqrt{a_0}\in \mathbb{Z}(p)$.
%про решения $x_0$ мы знаем из теории сравнений (\ref{mod}), --- решений ровно два, если~$a_0$ --- квадратичный вычет. Если~$a_0$ --- не квадратичный вычет, то у сравнения  (\ref{modpn}) не будет решений (для любого $m$). Пусть $a_0$  квадратичный вычет и мы выбрали одно из двух решений  $x_0$.
From (\ref{modpnn1}) and (\ref{x&a})  from next $p^2$-modulo congruence $x^2\equiv a\md{p^2}\;$ we get 
%В сравнении второй степени по~$p$ из неизвестных $x_i$ участвует  $x_1$, на предыдущем шаге найден $x_0$, и при этом второе уравнение линейно по  $x_1$:
\begin{equation}\label{modpn2}
	x_0^2 + 2 x_0 x_1 p +\cancel{x_1\, p^2}\equiv a_0+a_1 p \md{p^2}
\end{equation}
as linear congruence with respect to unknown $x_1$
$$
2 x_0  x_1 \; p\equiv a_0 -x_0^2 +a_1 p \md{p^2},
$$
one can reduce this right  side by $p$ due to (\ref{modpn1}). Then
$$
2 x_0 \; x_1 \equiv a_1 + \frac{a_0 -x_0^2}{p}  \md p.
$$
In field $\mathbb{Z}(p)$ since  $2x_0\neq0$ we have inverse element $(2 x_0)^{-1}\in \mathbb{Z}(p)$.
Digit $x_1$ is uniquely defined:
$$
x_1 \equiv \left(a_1 +  \frac{a_0 -x_0^2}{p} \right) (2 x_0)^{-1} \md p.
$$

From next $p^3$-modulo congruence $x^2\equiv a\md{p^3}$ we get linear congruence with unknown $x_2$
%Из (\ref{modpnn1}) в сравнении третьей степени по~$p$ из неизвестных $x_i$ участвует  $x_2$, на предыдущих шагах найдены $x_0$ и $x_1$, и получаем уравнение снова линейное по неизвестному $x_2$:
$$
\begin{aligned}
	(x_0 +x_1\, p)^2 + 2 (x_0 +\cancel{x_1\, p} ) x_2\, p^2 &\equiv a_0+a_1 p + a_2\, p^2\md{p^3} \quad \Rightarrow	\\
	2 x_0\,  x_2 \; p^2 &\equiv a_0+a_1 p -(x_0 +x_1\, p)^2 +a_2 p^2 \md{p^3},
\end{aligned}
$$
which right  side has common factor  $p^2$ due to (\ref{modpn2}). 
So at this step one can find  $x_2$ uniquely:
$$
x_2 \equiv \left( a_2+ \frac{a_0+a_1 p -(x_0 +x_1\, p)^2}{p^2} \right) (2 x_0)^{-1} \md p.
$$

All digits $x_\alpha$ are  found sequentially step by step. 
Let us consider arbitrary  step of finding $x_\alpha$ from $p^{\alpha+1}$-modulo congruence $x^2\equiv a\md{p^{\alpha+1}}$. 
%Все члены $p$-ичного разложения $x$ и $a$ кроме последнего заключаем в квадратные скобки
$$
\begin{aligned}
	([x_0 +x_1\, p +\cdots]+ x_\alpha p^\alpha)^2 &\equiv [a_0+a_1 p +\cdots]+ a_\alpha p^\alpha \md{p^{\alpha+1}};\\
	[x_0 +\cdots]^2 + 2 [x_0 +\cancel{\cdots} ] x_\alpha\, p^\alpha &\equiv [a_0+a_1 p +\cdots]+ \; a_\alpha p^\alpha \md{p^{\alpha+1}}; \\
	2 x_0\,  x_\alpha\, p^\alpha &\equiv [a_0 +\cdots]- [x_0 +\cdots]^2 +a_\alpha p^\alpha \md{p^{\alpha+1}}; \\
\end{aligned}
$$
We have $x_\alpha \in\mathbb{Z}(p) $ uniquely as
$$
x_\alpha \equiv \left(a_\alpha+  \frac{[a_0+\cdots] -[x_0 +\cdots]^2}{p^\alpha}  \right) (2 x_0)^{-1} \md p,
$$
%Так последовательно определяются все цифры~$x_\alpha$. Все получающиеся сравнения, кроме \eqref{modpn1} --- первой степени по неизвестному. 
%В последней строке формулы деление в кольце $\mathbb{Z}(p)$ производится  на ненулевой элемент $2x_0$.
%Это даёт единственность полного (все цифры) решения~$x$  (для любого $m$) для каждого выбранного~$x_0$ (первая цифра, определяется с точностью до знака). 
when $x_0$ is fixed with its sign, then  solution $x$ is calculated uniquely.

%So one-dimensional multiplicity  $ c_{p^m\; 1}(a\;\text{ with digit } a_0\neq 0)= 1+ \chi(a_0)$.
%Итак, для существования решения сравнения (\ref{modpn}) при $a_0\ne0$ необходимо и достаточно, чтобы~$a_0$  являлось квадратичным вычетом в поле $\mathbb{Z}(p)$, иначе не найдётся нужного~$x_0$. 
%Пусть мы имеем некое решение (\ref{modpn1}) для $x_0$ и обозначим $z=x_0\neq0$ (взаимно просто с $p$ по построению), и  $z^2=a_0 +k_0\, p$, где $k_0\in \mathbb{Z}$, тогда, для второй степени по~$p$ получим:$$ k_0\, p +2 z \, x_1\, p\equiv a_1\, p \mod p^2.$$
%Из свойств делимости сразу следует, что должно выполняться сравнение на $ x_1 $ первой степени по~$p$ $$ k_0\,  + 2 z \, x_1\equiv a_1 \mod p. $$
%Откуда единственным образом определяем значение $x_1$ (разрешено деление в поле вычетов по простому числу). Далее для $x_2$ (и последующих $x_i$) получится новое сравнение, которое будет линейным уравнением по $x_2$, и также из свойств делимости сведётся к сравнению первой степени по~$p$.
%Можно решение для $n=2$ выразить полностью через  $z$:\begin{equation}\label{key}	x\equiv(z^2 +a)(2z)^{-1}\md p^2, \end{equation} где обратный элемент к выражению в скобке взят в кольце вычетов по модулю $p^2$ (взаимно просто с $p$ по построению).
%Количество решений (\ref{modpn}) ровно два, если~$a_0\ne0$ является квадратичным вычетом, в противном случае решений нет. Например, имеет ровно два решения сравнение
%$$ x^2\equiv1 \md{p^m}. $$

Thus multiplicities in one dimension
\begin{equation}\label{cpma0}
	c_{p^m}(a) =c_{p}(a_0) =1 +\chi(a_0)\quad
	\text{if digit } a_0\ne0,
\end{equation}
and we obtain this property of multiplicities for arbitrary $n<m$
\begin{equation}\label{cpmcpn0}
	c_{p^m}(a)=c_{p^n}\left( a{\mod p^n}\right) \quad
	\text{if digit }a_0\ne0.
\end{equation}   

\paragraph*{For $a$ with $a_0=0$ and $a_1 \neq0$.}%Параметр $a$  имеет степень $p$ в себе 
From  $p^2$-modulo congruence $x^2\equiv a\md{p^2}\;$ we get
$$
(x_0+ x_1\, p + \cancel{x_2\, p^2} +\cdots)^2  \equiv a_1 p \md{p^2}.
$$
and anyway no solutions for $x_1$. So if $a_1 \neq0$ than $c_{p^m}(a)=0$, i.e.
\begin{equation*}%\label{cpm0a0a1}
	c_{p^m}(\xi\, p^1)=0, \quad
	\xi\in\mathbb{Z}(p^{m-1})\setminus\{0\}.
\end{equation*}
\paragraph*{For $a$ with $a_0=a_1=\cdots =a_{2\lambda}=0$ and $a_{2\lambda+1}\neq0$, $\lambda\geqslant0$.} 
%In this case the lowest non-zero digit in $a$ is $a_\beta\ne0$ and 
We put the lowest non-zero digit in $x$  as $x_\alpha\neq0$. Then congruence \eqref{modpn} takes the form
\begin{equation*}
	(x_\alpha\cdot p^\alpha+\cdots)^2 
	\equiv (a_{2\lambda+1}+\cdots)p^{2\lambda+1} \md{p^m}.
\end{equation*}
\begin{equation*}
	(x_\alpha+\cdots)^2 p^{2\alpha} 
	\equiv (a_{2\lambda+1}+\cdots)p^{2\lambda+1} \md{p^m}.
\end{equation*}
anyway no solutions exist and multiplicity
\begin{equation}\label{cpm0a0a1a}
	c_{p^m}(\xi\, p^{{2\lambda+1}})=0, \quad
	\text{where } \xi\in \mathbb{Z}(p^{m-2\lambda-1}) \setminus\{0\},\, \lambda\geqslant0.
\end{equation}

\paragraph*{For $a$ with $a_0=a_1=0$ and $a_2 \neq0$.}
From  $p^2$-modulo congruence $x^2\equiv a\md{p^2}\;$ we get
$$
(x_0+ x_1\, p + \cancel{x_2\, p^2 +\cdots})^2  \equiv 0 \md{p^2}.
$$
Simply $x_0=0$ and from  $p^3$-modulo congruence $x^2\equiv a\md{p^3}\;$ we get ordinary quadratic congruence for $x_1$ 
$$
( x_1\, p + \cancel{x_2\, p^2+\cdots} )^2  \equiv a_2 p^2 \md{p^3} \;\Rightarrow\; x_1^2 \equiv a_2 \md{p},
$$
with some solutions if $\sqrt{a_2}$ exists.
But now starting equation is not  (\ref{modpn}), because of power descending by 2 units
$$
( x_1\, p + x_2\, p^2+\cdots +x_{m-1}\, p^{m-1})^2  \equiv a_2 p^2+\cdots+a_{m-1}\, p^{m-1} \md{p^m} 
$$
$$
\;\Rightarrow\; ( x_1\, + {x_2\, p^1+\cdots} +\cancel{x_{m-1}\, p^{m-2}})^2  \equiv a_2 +\cdots+x_{m-3}\, p^{m-1} \md{p^{m-2}},
$$
thus obviously last digit $x_{m-1}$ can take arbitrary values from $\mathbb{Z}(p)$.
Therefore multiplicities as number of solutions of  (\ref{modpn}) is greater by $p$
$$
c_{p^m}(a)=\bigl(1+\chi(a_2) \bigr)\, p\quad
\text{if digit } a_2\ne0,
$$
i.e.
\begin{equation*}
	c_{p^m}(\xi\, p^2)=\bigl(1+\chi(\xi{\mod p}) \bigr)\, p, \quad \text{where }
	\xi\in\mathbb{Z}(p^{m-2})\setminus\{0\}.
\end{equation*}

\paragraph*{For $a$ with $a_0=a_1=\cdots =a_{2\lambda-1}=0$ and $a_{2\lambda}\neq0$, $2\lambda<m$.} 
%In this case the lowest non-zero digit in $a$ is $a_\beta\ne0$ and in $x$ the lowest non-zero digit is $x_\alpha\neq0$. Then congruence \eqref{modpn} takes the form
\begin{equation*}
	(x_\alpha+\cdots)^2 p^{2\alpha} 
	\equiv (a_{2\lambda}+\cdots)p^{2\lambda} \md{p^m}, \quad
	x_\alpha\neq0.
\end{equation*}
The solutions exist only when ${\alpha}=\lambda$. 
We obtain
\begin{equation}\label{x1}
	(x_\lambda+\cdots+\cancel{x_{m-1}\, p^{m-1-\lambda}})^2 \equiv a_{2\lambda} +\cdots \md{p^{m-2\lambda}},
\end{equation}
and we have the similar case as (\ref{cpma0})  (when $a$ is relatively prime to $p$).

in solution $x$ we see that the lowest $\alpha$ digits are zero and the highest $\alpha$ digits you can take arbitrary. And between them $m-2\alpha$ digits are calculated step by step from linear congruences.

Therefore multiplicities as number of solutions of  (\ref{modpn}) is greater by $p^{\alpha}$
\begin{equation}\label{fpma2k}
	c_{p^m}(a)= \bigl(1+ \chi(a_{2\alpha}) \bigr) p^{\alpha}, \quad 
	\text{if digits  $a_0=\cdots =a_{2\alpha-1}=0$, $a_{2\alpha}\neq0$ and } 2\alpha<m.
\end{equation}
And for arbitrary $n<m$
\begin{equation}\label{cpmcpn2}
	c_{p^m}(a)=c_{p^n}\left( a{\mod p^n}\right)  \quad
	\text{if digit $a_{2\alpha}\ne0$ and $2\alpha\leqslant n<m$.}
\end{equation} 
In section \ref{AppNpm} we'll investigate similarity of generating functions in $d=1$ using this relation.

\paragraph*{For $a=0$ with $a_0=\cdots =a_{m-1}=0$.} We denote the lowest non-zero digit in $x$ by index $\alpha$.
\begin{equation}\label{x2_0pm}
	(x_\alpha p^\alpha+\cdots)^2=(x_\alpha+\cdots+x_{m-1}\, p^{m-1-\alpha})^2 \; p^{2\alpha} \equiv0 \md{p^m}.
\end{equation}
We have solutions here if
$$
2\alpha \geqslant m,
$$
so the lowest value of index $\alpha$ is $
{\left\lfloor\frac{m+1}{2}\right\rfloor}.
$
The number of arbitrary digits is $\left\lfloor\frac{m}{2}\right\rfloor$.  
So multiplicity is $c_{p^m}(0)=p^{\lfloor\frac{m}{2}\rfloor}$.
We get all-for-one in formul\ae{} 
\begin{equation}\label{cpm}
	c_{p^m}(a) 
	=\left\{ \begin{aligned}
		p^{\lfloor\frac{m}{2}\rfloor} 
		\quad 
		&\text{ if }a=0, \\
		\bigl(1+ \chi(a_{2\lambda}) \bigr)\, p^{\lambda} \quad 
		&\text{ if $a$  is divisible by $p^{2\lambda}$, i.e. $(a,p^m)=p^{2\lambda}$, and  $\lambda\geqslant0$,}\\	
		0 \quad 
		&\text{ if }a\text{ is divisible by } p^{2\lambda+1},\text{ i.e. }(a,p^m)=p^{2\lambda+1}, \lambda\geqslant0. \\
	\end{aligned}\right.  
\end{equation}

\section{Generating  functions for multiplicities }\label{Appendix0}
We define the polynomial of $\tau$ 
\be\label{fdN}
f^d_N(\tau)=\sum_{k\in\mathbb{Z}(N)} c_{Nd}(k)\,\tau^k
\ee
as a generating  function for multiplicities $ c_{N\,d}(k) $ of $E(k)$ on the lattice $\mathbb{Z}^d(N)$.
For  one-dimensional generating function we  skip the upper index $ d=1 $, i.e. $f_N(\tau)\stackrel{\text{def}}{=}f^1_N(\tau)$. And $f_0(\tau)\stackrel{\text{def}}{=}1$.

We use brackets
\footnote{$\left\langle \tau^{k} \mid f_N(\tau) \right\rangle =\frac{1}{N} \sum_{l=1}^N \tau(l)^* f_N(\tau(l)) $, where $\tau(l)$ are roots of unity in $\mathbb{C}$ given by $\tau(l)=\E^{2\pi \I \, l/N} $. }
to extract multiplicity $ c_{N\,d}(k) $ from generating  function: 
\begin{equation}\label{bra}
	\left\langle \tau^{k} \mid f^d_N(\tau) \right\rangle =c_{N\,d}(k).
\end{equation}

We treat $\tau$ as a formal variable with the equivalence relation:
\be\label{tauN}
\tau^N=1,
\ee
i.e. the powers of $\tau$ in (\ref{fdN}) can be considered as elements of $\mathbb{Z}(N)$.
The multiplication with  relation (\ref{tauN}) we denote as $f\circ g$.
Hereafter $\tau^N$ is equivalent to unity in reduction of similar terms so that $\tau^{k+N}=\tau^k$:
$$
\left(\sum_{p_1\in\mathbb{Z}(N)} \tau^{p_1^2}\right)^d
=\sum_{p_1\in\mathbb{Z}(N)}\cdots \sum_{p_d\in\mathbb{Z}(N)} \tau^{p_1^2}\circ \cdots \circ\tau^{p_d^2}
%=\sum_{\mathbf{p}\in\mathbb{Z}^d(N)} \tau^{p_1^2+\dots+p_d^2} 
= \sum_{\mathbf{p}\in\mathbb{Z}^d(N)} \tau^{\mathbf{p}^2}
,
$$ % не удалось найти (наподобие \idotsint) команду, дающую два знака суммы с многоточиеа между ними.
hence in case $d>1$ the generating function can be represented as a $d$ times product of one-dimensional generating functions:
\be\label{f^d}
f^d_N(\tau)
%=\sum_{\mathbf{p}\in\mathbb{Z}^d(N)} \tau^{\mathbf{p}^2}
=\Bigl(f_N(\tau)\Bigr)^{ d}.
\ee

The introduced equivalence (\ref{tauN}) doesn't violate the commutativity and distributivity of multiplication. Thereby, we reduce our investigation of multiplicities (\ref{c(k)}) to problems of polynomial algebra and arithmetic of quadratic forms \cite{dotsenko}, \cite{vavilov}.

One can assume that variable $\tau$ takes values in the set of complex $N$th roots of unity:
$$
\tau\in\{\tau(l)=\E^{2\pi\I\frac{l}{N}},\, l\in\mathbb{Z}(N)\}.
$$  In such a representation we find multiplicities $c_{Nd}(k)$ as coefficients of  discrete Fourier expansion of the function $f^d_N(\tau(l))$ on the lattice  $\mathbb{Z}(N)$.

\subsubsection*{Polynomial  $\phi_N(\tau)$ }  
We  define the polynomial $\phi_N(\tau)$ with all coefficients  equal to 1: 
\begin{equation}\label{keyF}
	\phi_N(\tau)=\sum_{k\in\mathbb{Z}(N)} \tau^k=  1 +  \tau +  \tau^2 + \cdots +  \tau^{N-1}.
\end{equation} 
Here $\phi_N(\tau) \not\equiv \frac{1-\tau^N}{1-\tau}$ because due to (\ref{tauN}) the  value 1 of variable $\tau$ is possible value.

If $\tau$ runs over complex roots of unity then 
$$
\phi_N(\tau)~=~\left\{\begin{array}{cc}
	N &\text{if }\tau=1,\\
	0 &\text{otherwise}.%\text{if }\tau\not=1,\\
\end{array}\right.
$$
From this point of view $\phi_N(\tau)$ is the lattice analogue of the delta function $$
\phi_N\bigr(\tau(l)\bigl)=\left\{\begin{array}{cc}
	N &\text{if }l=0,\\
	0 &\text{otherwise}%l\not=0,\\
	.
\end{array}\right.$$

%\subsubsection*{Useful equalities with $\phi_p$. }
For any  $k\in\mathbb{Z}(N)$ we have
\be\label{tauphi} 
\tau^k \circ \phi_N \; (\tau)= 1\cdot\phi_N (\tau) ,
\ee
hence
\be\label{phi2}
\phi_N \circ \phi_N = \left( \sum_{k=0}^{N-1} \tau^{k}\right) \circ \phi_N = N\cdot \phi_N 
\ee
and consequently
\be\label{NminusphiN}
\phi_N \circ \left(N- \phi_N\right) = 0,
\ee
\begin{equation*}
	\begin{aligned}
		\left(N- \phi_N\right)^{\;2}
		&= N\left(N- \phi_N\right) ,\\
		\left(N- \phi_N\right)^{\;m}
		&= N^{m-1}\left(N- \phi_N\right).
\end{aligned}\end{equation*}

\section{ Case $N=p$, $p\not=2$}\label{N=p}
Now $p$ is prime odd number and we put $N=p$ hereafter in section \ref{Appendix0}. During intermediate calculations we sometimes %omit
skip the index $p$ (or $N$).

For the one-dimensional case we can derive the explicit expression for the multiplicities $ c_{N\,1}(k) =c_{N}(k)$ in terms of  quadratic congruences \cite{vinogradov}.

\subsubsection*{Polynomial  $g_p(\tau)$ }  
Coefficients $c_{p}(k)$ take values $0$ or $2$, or $1$ (square root of $0$); and the generating function is
$$ 
f_p(\tau)= 1 + 2 \tau+ 2 \tau^r + 2 \tau^s + \cdots =1+\sum_{r\in \text{Res}(p)}2\tau^r,
$$ 
where powers $\tau$ with nonresidues $\tilde{r},\tilde{s},..  $ disappeared from $f_p(\tau)$ because of its null coefficients (multiplicities).

Next we introduce the polynomial $ g_p(\tau)$ using the character $\chi(k)$ (defined in section \ref{AppA}):
\begin{equation}\label{gpchi}
	g_p(\tau)= \sum_{k\in\mathbb{Z}(p)} \chi(k)\cdot \tau^k
	%= \sum_{a=1}^{p-1} \chi(a)\cdot \tau^a 
	. 
\end{equation}

So, the generating function for multiplicities in the one-dimensional case is   the sum of polynomials $g_p(\tau)$ и $\phi_p(\tau)$:
\begin{equation}\label{f=fi+g}
	f_p(\tau)= 1+\sum_{r\in \text{Res}(p)}2\tau^r 
	%= 1 +  \tau +  \tau^2 + \cdots +  \tau^{N-1} +  \sum_{a=1}^{p-1} \chi(a)\tau^a
	= \phi_p(\tau) +g_p(\tau). 
\end{equation}
And multiplicities in $d=1$ are given by explicit formula:
$$
\text{in $d=1$ } \quad c_{p\,1}(k)=1+\chi(k).
$$

\subsection{ Properties of polynomials $g_p$ and $\phi_p$ in multiplication}
Due to \eqref{phi2} we have
\be\label{fipfip}
\begin{aligned}
	\phi_p &\circ \phi_p = p\cdot \phi_p \quad\text{ and }\quad 
	\phi_p  ^{\; m} = p^{m-1}\cdot \phi_p,\\
	(p - \phi_p) &\circ \phi_p = 0,
\end{aligned}
\ee
and   we get
\be\label{gphi}
\phi_p \circ g_p =0,
\ee
because of \eqref{tauphi} and \eqref{sumchi} 
\begin{equation*}
	\phi_p \circ \left( \sum_{a=0}^{p-1}  \chi(a)\,\tau^a \right) 
	=\phi_p \circ \Biggl( \underbrace{\sum_{a=0}^{p-1}  \chi(a) }_0 \Biggr) 
	=0.
\end{equation*}

\subsubsection*{ Proposition for $g_p^{\,2}$. } 
The explicit result for $g_p^{\,2}$ is:
\begin{equation}\label{g2_}
	g_p\circ g_p %\;(\tau)
	=  (-1)^{\frac{p-1}{2}} \left(\vphantom{\frac{a}{a}} p- \phi_p%(\tau) 
	\right).
\end{equation} 

%\subsubsection*{The Proof for $g_p^2$. }
\begin{proof}
	Writing out $g_p^{\,2}$ as a double sum and remembering about \eqref{tauN}, we will make a transition at  fixed $a$ from  summation variable $b$ to the new summation variable 
	$t$ by simple rule \mbox{$b\equiv a\cdot t \md p$}, since  $\mathbb{Z}(p)$ is field. In  third line, we split the sum over $t$ into two parts.
	\begin{eqnarray*}
		%\label{keygg}
		g_p\circ g_p \;(\tau) = \left( \sum_{a} \chi(a)\tau^a\right)  \circ \left( \sum_{b} \chi(b)\tau^b\right) 
		= \sum_{a,b=1}^{p-1} \chi(a)\chi(b)\tau^{a+b}= \\
		= \sum_{a,t=1}^{p-1} \chi(a)\chi(at)\tau^{a+at }
		= \sum_{t=1}^{p-1}\sum_{a=1}^{p-1} \cancelto{1}{\chi^2(a)}\chi(t)\tau^{a(1+t) } =\\
		=\chi(t)\big|_{t=p-1}\sum_{a=1}^{p-1} {\tau^{a\cdot0}} +
		\sum_{t=1}^{p-2}\chi(t) \sum_{a=1}^{p-1} \tau^{a(1+t)}
		= \chi(-1)\sum_{a=1}^{p-1} 1 +\sum_{t\neq -1}\chi(t) \sum_{a=1}^{p-1} \tau^{a} =\\
		= \chi(-1)(p-1) + \left(\vphantom{\frac{a}{a}} 0-\chi(-1)\right)\cdot \bigl( -1+\phi_p(\tau)\bigr) 
		= \chi(-1)\left(\vphantom{\frac{a}{a}} p- \phi_p(\tau) \right). %=\\
		%= (-1)^{\frac{p-1}{2}} \left(\vphantom{\frac{a}{a}} p- \Phi(\tau) \right) .
	\end{eqnarray*} 
\end{proof}
As a consequence of (\ref{fipfip}), (\ref{gphi}), (\ref{g2_}) we have the equation for arbitrary powers $m$ and $n$
\begin{equation}
	\phi_p^{\;m} \circ g_p^{\;n} =0,	
\end{equation}
then we obtain
\begin{equation}\label{phi+g_^m}
	(\phi_p + g_p) ^{\;m} = \phi_p^{\;m} + g_p^{\;m}.
\end{equation}
And from (\ref{g2_})
\begin{equation}\label{g3_}
	g_p\circ g_p \circ g_p %\;(\tau)
	=  (-1)^{\frac{p-1}{2}} \cdot p\cdot g_p %(\tau)
	.
\end{equation} 
Arbitrary power $m$:
\begin{equation}\label{g^m}
	g_p ^{\; m}=
	(-1)^{\frac{p-1}{2}\lfloor\frac{m}{2}\rfloor}
	\begin{cases}
		p^{\lfloor\frac{m}{2}\rfloor} \cdot g_p &\text{if odd $m$,}\\
		p^{\frac{m}{2}-1} \bigl(p -  \phi_p) &\text{if even $m$.}\\
	\end{cases}
\end{equation}

\subsection{Multiplicities in $d=2$ }
From generation function \eqref{f=fi+g} we have in two-dimensional case due to (\ref{phi+g_^m})
$$
f^2_p(\tau) = f_p \circ f_p
= \phi_p \circ \phi_p %+2\,\phi_p \circ g_p
+ g_p\circ g_p.
$$
Now  we have mathematical expression for $f^2_p$ for any prime odd $p$:
\begin{equation}\label{f2p}
	f^2_p = p\cdot\phi_p%(\tau)
	+ (-1)^{\frac{p-1}{2}} \bigl( p- \phi_p%(\tau)
	\bigr),
\end{equation}
in explicit polynomial form it looks as
\begin{equation}\label{f2}
	f^2_p(\tau)=  
	\begin{cases}
		(2p-1)\cdot\tau^0  +  (p-1)\cdot\tau^1 +\cdots+(p-1)\cdot\tau^{p-1} &
		\text{if even $\frac{p-1}{2}$,}\\ 	  
		\qquad 1\cdot\tau^0	  \; + \; (p+1)\cdot\tau^1 +\cdots+(p+1)\cdot\tau^{p-1} &
		\text{if odd  $\frac{p-1}{2}$}.
	\end{cases}
\end{equation}
So polynomial coefficient from $f_p^2$ is congruent with $\pm1$ and multiplicity is congruent as
\begin{equation}\label{cp2k}
	\text{in $d=2$ } \quad c_{p\,2}(k)\equiv -\chi(-1) \md{p},\quad k\in\mathbb{Z}(p).
\end{equation}

\subsection{Multiplicities in $d=3$ }

Using (\ref{phi+g_^m}) and (\ref{g^m}) for $f^3_p$ we have
\begin{equation*}
	f^3_p = f_p \circ f_p \circ f_p = 
	\phi_p^{\;3} + g_p^{\;3} 
	= p^2\,\phi_p +(-1)^{\frac{p-1}{2}}\, p \, g_p.
\end{equation*}
In the expression above one can take the common factor $p$ out of brackets. So each coefficient in polynomial above is divisible by $ p $ and we have
$$ c_{p\,3}(k)\equiv 0 \md p .$$ 

And for $ d>3 $ formula
$$ c_{p\,d}(k)\equiv 0 \md p$$ 
is obvious due to factorization (\ref{f^d}). We proved the proposition of the Theorem in case $N=p$.

\section{Case $N=p^m$, $p\not=2$ } \label{AppNpm}

\subsubsection*{Cyclotomic Polynomials $\Phi_n(\tau)$}

In this section $N=p^m$,  $p$ is prime odd and  $m\in\mathbb{N}$, $m>1$.
The equivalence relation (\ref{tauN}) now is
\begin{equation*}
	\tau^{p^m}=1.
\end{equation*}

%$d=1$ for first few subsections.

In addition to $\phi_N(\tau)$ as sum of all powers in (\ref{keyF}) we introduce cyclotomic polynomials for $N=p^m$:
\begin{equation}\label{Phipm=}
	\begin{aligned}
		\Phi_p (\tau) &= 1 +  \tau +  \tau^2 +  \cdots +  \tau^{p-1}=\phi_p (\tau) , \\
		\Phi_{p^2}(\tau) =\Phi_p(\tau^p) &=  1 +  \tau^{p}  +  \tau^{2p}+  \tau^{3p}+ \dots +\tau^{(p-1)p} =\phi_p(\tau^p), \\
		\Phi_{p^3}(\tau) =\Phi_p(\tau^{p^2}) &=  1 +\tau^{p^2} +\tau^{2p^2} +\tau^{3p^2} +\cdots+\tau^{(p-1)p^2} =\phi_p(\tau^{p^2}), \\
		\vdots\\	
		\Phi_{p^{m}}(\tau) =\Phi_p(\tau^{p^{m-1}}) &=  1 +\tau^{p^{m-1}} +\tau^{2p^{m-1}} +\tau^{3p^{m-1}} +\cdots+\tau^{(p-1)p^{m-1}} =\phi_p(\tau^{p^{m-1}}) .\\
	\end{aligned} 
\end{equation}

One can easily derive that the product of cyclotomic polynomials of successive powers of $p$ is the sum of all powers of $\tau$ in $\phi_N(\tau)$: 
$$
\phi_{p^{m}} = \Phi_{p}\cdot\Phi_{p^2}\cdots\Phi_{p^{m}}.
$$

We have similar to (\ref{fipfip}) equalities 
%Умножение с факторизацией даёт:
\be\label{Fpm}
\begin{aligned}
	%\Bigl(
	\tau^{p^{m-1}} \circ \Phi_{p^{m}} 
	%\Bigr)_{\tau^{p^m}=1} 
	=  \Phi_{p^{m}}, \\
	%\Bigl( 
	\Phi_{p^{m}} \circ \Phi_{p^{m}}
	%\Bigr)_{\tau^{p^m}=1} 
	= p\cdot \Phi_{p^{m}},\\
	%\Bigl( 
	\left(p- \Phi_{p^{m}} \right)  \circ \Phi_{p^{m}} 
	%\Bigr)_{\tau^{p^m}=1} 
	= 0.
\end{aligned}
\ee
And polynomials with modified argument $\tau^{p^{m-1}}$
\be\label{phi*g_pm}
\Phi_{p^{m}} (\tau) \circ g_p (\tau^{p^{m-1}}) =
\phi_p (\tau^{p^{m-1}}) \circ g_p (\tau^{p^{m-1}}) =0,
\ee
\begin{equation}\label{pm_phi+g_^n}
	\bigl(\phi_p (\tau^{p^{m-1}}) + g_p (\tau^{p^{m-1}}) \bigr) ^{\;n} = \phi_p (\tau^{p^{m-1}}) ^{\;n} + g_p (\tau^{p^{m-1}}) ^{\;n}.
\end{equation}

\subsection{Similarity of successive generating functions in $d=1$}

\subsubsection*{ Proposition. } The recurrent formula for $f_{p^m}(\tau)=\mathcal{F}\left( f_{p^{m-1}}(\tau)\right) $  is
\begin{equation}\label{fpm=Phipm_fpm-1+g^m}
	f_{p^m}(\tau)  =
	\Phi_{p^m}(\tau) \cdot f_{p^{m-1}}(\tau) 
	+(-1)^{\frac{p-1}{2}\lfloor\frac{m}{2}\rfloor} \cdot
	\bigl(  g_p(\tau^{p^{m-1}}) \bigr)^m ,
\end{equation}
or via (\ref{g^m})
\begin{equation}\label{rrf2k1}
	f_{p^m}(\tau)  =
	\Phi_{p^m}(\tau)\cdot f_{p^{m-1}}(\tau) +  
	\begin{cases}
		p^{\lfloor\frac{m}{2}\rfloor} \cdot g_p(\tau^{p^{m-1}}) & \text{if odd } m,\\
		p^{\frac{m}{2}-1} \bigl(p -  \Phi_{p^m}(\tau)\bigr)& \text{if even } m.
	\end{cases}	
\end{equation}

\subsubsection*{The Proof for $f_{p^{m}}$ recurrent formula}
For example, if  we take  $k=1 \in\mathbb{Z}(p^{m})$ then from (\ref{cpma0}) in $d=1$ it follows that $c_{p^{m}}(1) = c_{p\,}(1)$. Using notation with brackets (\ref{bra}) we write 
\begin{equation*}
	\left\langle \tau^{1} \mid f_{p^{m}}(\tau) \right\rangle =\left\langle \tau^{1} \mid f_{p}(\tau) \right\rangle = 1+\chi(1) = 2
\end{equation*}
To generalize this relation we use (\ref{cpmcpn0}) for non divisible by $p$ arbitrary $k\in\mathbb{Z}(p^{m})$ and the equality of $\tau^k$ coefficients $c_{p^{m}}(k)$   in $f_{p^{m}}(\tau)$ and  $c_{p^{m-1}}(k {\scriptstyle \mod p^{m-1}} )$   in $f_{p^{m-1}}(\tau)$ is clear. Using cyclotomic polynomials (\ref{Phipm=}) we can write this similarity relation
\begin{equation}\label{tau^pow_nondiv_p}
	\Bigl\langle \tau^{k} \mid f_{p^{m}}(\tau)\Bigr\rangle
	= \Bigl\langle \tau^{k} \mid \Phi_{p^{m}}(\tau)\circ f_{p^{m-1}}(\tau)\Bigr\rangle
	= \Bigl\langle \tau^{k} \mid \Phi_{p^{m}} %\cdot \Phi_{p^{m-1}}
	\cdots\Phi_{p^{2}}\cdot f_p\Bigr\rangle
	\quad\text{ if } k\not\equiv0\md p .
\end{equation}
NB. In fact, the largest power of $\Phi_{p^{m}}$ polynomial is $(p-1)p^{m-1}$ and the largest power of $f_{p^{m-1}}$ polynomial is $p^{m-1} -1$ and power of their product is
$$
(p-1)p^{m-1} +p^{m-1} -1=p^{m}-1,
$$
and no need to use here $\circ$ product.

We can obtain formula more general  than (\ref{tau^pow_nondiv_p}). Due to (\ref{cpmcpn2}) $c_{p^m}(a)=c_{p^{m-1}}\left( a{\scriptstyle\mod p^{m-1}}\right)$ and then
\begin{equation}\label{tau^nondiv_p^m-1}
	\Bigl\langle \tau^{k} \mid f_{p^{m}}(\tau)\Bigr\rangle
	= \Bigl\langle \tau^{k} \mid \Phi_{p^{m}}(\tau)\cdot f_{p^{m-1}}(\tau)\Bigr\rangle
	\quad\text{ if } k\not\equiv0\md{p^{m-1}}.
\end{equation}

We'll consider odd $m$ and even separately.
\subsubsection*{For odd $m=2\mu+1$, $\mu\in\mathbb{N}$.} 
For example, if  we take  $k=0\in\mathbb{Z}(p^{m})$ then
\begin{equation*}
	\Bigl\langle \tau^{0} \mid f_{p^{m}}(\tau)\Bigr\rangle
	= p^\mu
	\qquad\text{ and }\qquad
	\Bigl\langle \tau^{0} \mid \Phi_{p^{m}}(\tau)\cdot f_{p^{m-1}}(\tau)\Bigr\rangle = p^\mu.
\end{equation*}
For  $k=p^{2\mu}\in\mathbb{Z}(p^{m})$
$$\begin{aligned}
	\Bigl\langle \tau^{p^{2\mu}} \mid f_{p^{m}}(\tau)\Bigr\rangle
	&= c_{p^{m}}(1\cdot p^{2\mu} ) = \left( 1+\chi(1) \right) p^\mu, \\
	\Bigl\langle \tau^{p^{2\mu}} \mid \Phi_{p^{m}}(\tau)\cdot f_{p^{m-1}}(\tau)\Bigr\rangle  &=
	\Bigl\langle \tau^{p^{2\mu}} \bigm| (1+{\tau^{p^{2\mu}}}+\cdots)\cdot (p^\mu+2\tau+\cdots) \Bigr\rangle = p^\mu,
\end{aligned}$$
so in general case when $k\equiv0\md{p^{m-1}}$
\begin{equation}\label{tau^div_p^m-1}
	\Bigl\langle \tau^{k} \mid f_{p^{m}}(\tau)\Bigr\rangle
	=  \Bigl\langle \tau^{k} \mid \Phi_{p^{m}}(\tau)\cdot f_{p^{m-1}}(\tau)\Bigr\rangle 
	+ \chi(\kappa)\, p^\mu\, ,
	\qquad\text{ where } k=\kappa\cdot p^{m-1}\in\mathbb{Z}(p^{m}).
\end{equation}
Relations 
(\ref{tau^nondiv_p^m-1}) and (\ref{tau^div_p^m-1}) give for odd $m$
\begin{equation}
	f_{p^m}(\tau) 
	= \Phi_{p^m}(\tau) \cdot f_{p^{m-1}}(\tau) +p^\mu \sum_{\kappa=0}^{p-1}\chi(\kappa) \tau^{\kappa\, p^{m-1}}
	= \Phi_{p^m}(\tau) \cdot f_{p^{m-1}}(\tau) + p^{\lfloor\frac{m}{2}\rfloor} \cdot g_p(\tau^{p^{m-1}}).
\end{equation}

\subsubsection*{For even $m=2\mu$, $\mu\in\mathbb{N}$.}
If  we take  $k=0\in\mathbb{Z}(p^{m})$ then due to (\ref{cpm})
\begin{equation*}
	\Bigl\langle \tau^{0} \mid f_{p^{m}}(\tau)\Bigr\rangle
	= p^\mu
	\qquad\text{ and }\qquad
	\Bigl\langle \tau^{0} \mid \Phi_{p^{m}}(\tau)\cdot f_{p^{m-1}}(\tau)\Bigr\rangle = p^{\mu-1}.
\end{equation*}
For  $k=p^{2\mu}\in\mathbb{Z}(p^{m})$
$$\begin{aligned}
	\Bigl\langle \tau^{p^{m-1}} \mid f_{p^{m}}(\tau)\Bigr\rangle
	&= 0, \\
	\Bigl\langle \tau^{p^{m-1}} \mid \Phi_{p^{m}}(\tau)\cdot f_{p^{m-1}}(\tau)\Bigr\rangle  &=
	\Bigl\langle \tau^{p^{2\mu-1}} \bigm| (1+{\tau^{p^{2\mu-1}}}+\cdots)\cdot (p^{\mu-1}+2\tau+\cdots) \Bigr\rangle = p^{\mu-1},
\end{aligned}$$
so in general case when $k\equiv0\md{p^{m-1}}$
\begin{equation}\label{tau^div_p^m-1+}
	\Bigl\langle \tau^{k} \mid f_{p^{m}}(\tau)\Bigr\rangle
	=  \Bigl\langle \tau^{k} \mid \Phi_{p^{m}}(\tau)\cdot f_{p^{m-1}}(\tau)\Bigr\rangle 
	+ p^\mu\cdot\delta(\kappa) - p^{\mu-1}\cdot1 ,
	\quad\text{ where } k=\kappa\cdot p^{m-1}\in\mathbb{Z}(p^{m}).
\end{equation}
We obtain
\begin{equation}
	\text{for } m=2\mu :\quad f_{p^m}(\tau) 
	= \Phi_{p^m}(\tau) \cdot f_{p^{m-1}}(\tau) + p^{\frac{m}{2}-1} \bigl(p -  \Phi_{p^m}(\tau)\bigr).
\end{equation}

\subsection{Multiplicities in $d=2$}
In two-dimensional case due to (\ref{pm_phi+g_^n}) from one-dimensional (\ref{fpm=Phipm_fpm-1+g^m}) we have
\begin{equation}
	\begin{aligned}
		f^2_{p^m} &= f_{p^m} \circ f_{p^m} =  \left(  \Phi_{p^m}\circ f_{p^{m-1}}\right) ^2 \,
		+ \Bigl(  g_p(\tau^{p^{m-1}}) \Bigr)^{2m}=\\
		&= p\cdot \Phi_{p^m}\cdot f^2_{p^{m-1}} \; + \left( (-1)^{\frac{p-1}2} \cdot\left(p -\Phi_{p^m}\right)\right) ^{m} =\\
		&=  p\cdot \Phi_{p^m}\cdot f_{p^{m-1}}^2 \; + (-1)^{\frac{p-1}{2}\,{m}} \cdot p^{m-1} \cdot\left(p -\Phi_{p^m}\right).
\end{aligned}\end{equation}

\subsection{Multiplicities in $d=3$}
In three-dimensional case
\begin{equation}\label{f3pm=Phi*f3pm-1}
	\begin{aligned}
		f^3_{p^m} = f_{p^m} \circ f_{p^m} \circ f_{p^m} 
		&= \left(  \Phi_{p^m}\circ f_{p^{m-1}}\right) ^3\, 
		+ (-1)^{\frac{p-1}{2}\,\lfloor\frac{m}{2}\rfloor\, m} \Bigl(  g_p(\tau^{p^{m-1}}) \Bigr)^{3m} =\\
		&= p^2\cdot \Phi_{p^m}\cdot f^3_{p^{m-1}} \, + (-1)^{\frac{p-1}{2}\lfloor\frac{m}{2}\rfloor m} \Bigl(  (-1)^{\frac{p-1}2}\, p\cdot g_p(\tau^{p^{m-1}}) \Bigr)^{m} =\\
		&= p^2\cdot \Phi_{p^m}\cdot f^3_{p^{m-1}} \, + (\pm)\, p^{m} 
		\Bigl( g_p(\tau^{p^{m-1}})\Bigr)^{m} ,\\
	\end{aligned}
\end{equation}
where the sign $\pm$ equals $(-1)$ if $\quad -p\equiv m\equiv1 \md 4\quad$ and $(+1)$ otherwise.

It is clear that if $f^3_{p^{m-1}}$ is divisible $p^{m}$ than $f^3_{p^{m}}$ is also divisible by $p^{m}$. In section \ref{N=p} the Theorem is proved for $N=p^1$ and hence by induction we proved (\ref{f3pm=Phi*f3pm-1}) the Theorem  for $N=p^m$.

%\begin{equation}
%	f^3_{p^m} =  p^2\cdot \Phi_{p^m}\circ f_{p^{m-1}}^3 \; 
%		+ (-1)^{\frac{p-1}2\, step(m)} p^{m+\lfloor\frac{m}{2}\rfloor-1}\cdot
%	\begin{cases}
	%		p\cdot g_p(\tau^{p^{m-1}}) 		& \text{if odd } m,\\
	%		p -   \Phi_{p^m}(\tau)& \text{if even } m.
	%	\end{cases}		
%\end{equation}

%\section{Доказательство для $N=2^m$}
%\section{Доказательство для произвольного $N$}

%\section{Доказательство для составных $N$}

\section{Definitions for proof-2}

%\subsection*{Определения}

\subsection{$d$-dimensional integer space}

Let us consider the $d$-dimensional space $\mathbb{Z}^d$ of integer vectors
$$
   \mbox{\boldmath$\xi$}=(\xi_1, \xi_2, ...,  \xi_d),\quad \xi_i \in \mathbb{Z}.
$$

In the figures, the elements of space $\mathbb{Z}^d$ is represented by cells of unit size in all coordinates.

Let $\mathrm L$ be some finite subset of $\mathbb{Z}^d$
$$
\mathrm L \subset \mathbb{Z}^d,\quad | \mathrm L | \in\mathbb{N},
$$
where $| \mathrm L |$ is the number of elements in $\mathrm L$.

Notation
$$
  n \cdot \mathrm L + \mathbf{a}=\{n\cdot\boldsymbol{\xi}+\mathbf{a}|\boldsymbol{\xi}\in\mathrm L\}.
$$
where $n \in \mathbb{N}$, $\mathbf{a} \in \mathbb{Z}^d$, is used for the set $\mathrm L$ stretched by factor $n$ and shifted by the vector $\mathbf{a}$.
That is, the points of this set are obtained from the points of the set $\mathrm L$ by multiplying all coordinates by $n$ and shifting by vectors $\mathbf{a}$:
$$
  \xi_i \rightarrow n\xi_i + a_i
$$

The notation $\mathrm L + \mathbf{a}$ is equivalent to $1 \cdot \mathrm L + \mathbf{a}$.

The notation $n \cdot \mathrm L$ is equivalent to $n \cdot \mathrm L + \mathbf{0}$, where $\mathbf{0}$ is the null vector in $\mathbb{Z}^d$.

\subsection{Sets of integers}

\begin{definition}
	$\mathbb{Z}(\alpha)$ is the set of integers from $0$ to $\alpha-1$, i.e.
	$$\mathbb{Z}(\alpha) = \left\{0, 1, 2, \ldots, \alpha-1 \right\}$$
\end{definition}
\begin{definition}
	$n\mathbb{Z}(\alpha) + a$ is the set of all numbers from $\mathbb{Z}(\alpha)$,
	multiplied by $k$ and shifted by $a$, i.e.
$$
   n\mathbb{Z}(\alpha)+a=\left\{a, n+a, 2n+a, \ldots, (\alpha-1)n+a\right\}.
$$
\end{definition}

The notation $\mathbb{Z}(\alpha)+a$ is equivalent to $1\mathbb{Z}(\alpha)+a$.

The notation $n\mathbb{Z}(\alpha)$ is equivalent to $n\mathbb{Z}(\alpha)+0$.

\subsection{Lattices}

Let us introduce the notation for the $b$-dimensional lattice 
$\underbrace{N \times N \times \ldots \times N}_b$.
\begin{definition}
	$\lattice^b(N)$ for $a\leqslant d$ is the set of $\boldsymbol{\xi}\in\mathbb{Z}^d$ such that $\xi_1,\dots,\xi_b\in\mathbb{Z}(N)$, and $\xi_{b+1}=\dots=\xi_d=0$.
\end{definition}

\begin{definition}
	$\lattice_{i_1\dots i_b}^b(N)$ for $b\leqslant d$ is the set of $\boldsymbol{\xi}\in\mathbb{Z}^d$ such that $\xi_{i_1},\dots,\xi_{i_b}\in\mathbb{Z}(N)$, and $\xi_j=0$ if $j\not\in\{i_1,\dots,i_b\}$.
\end{definition}

For example, if $d=6$, then $\lattice^3_{2,3,5}(N)$ is the set of $\boldsymbol{\xi}\in\mathbb{Z}^6$ such that $\xi_2,\xi_3,\xi_5\in\mathbb{Z}(\alpha)$, and $\xi_1=\xi_4=\xi_6=0$.

\subsection{Functions}

%С целью упрощения текста имеет смысл ввести функции, одна из которых описывает количество ячеек в множестве ячеек с фиксированной суммой квадратов по модулю, а другая описывает факт обнуления множества ячеек.

%Также, подразумевается суммирование квадратов. Но в дальнейшем, например, при проверке обобщённой гипотезы, нужно будет уточнять, какие функции суммируются.

%"$\ncells$"\ означает "number of cells".
\begin{definition}
	$\ncells(\mathrm L, \alpha, k)$ ("the number of cells") is the number of cells 
	$\boldsymbol{\xi}\in\mathrm L\subset\mathbb{Z}^d$, 
	such that 
$$
  \sum_{i=1}^d \xi_i^2\equiv k \md\alpha.
$$
\end{definition}
In terms of $\ncells$ the multiplicity function $c_{Nd}$ \eqref{c(k)} has the following form
$$
  c_{Nd}(k)=\ncells(\lattice^d(N),N,k).
$$

\begin{definition}
	The logical function $\zrf(\mathrm L, \alpha, \beta)$ ("zerofy") is equal to ``$true$'' if 
	$$
  	 \forall k\in\mathbb{Z}(\alpha)\quad \ncells(\mathrm L, \alpha, k)\equiv 0 \md\beta,
	$$ 
	otherwise it is ``$false$''.
\end{definition}

The notation $\zrf(\mathrm L, \alpha, \beta)$ is equivalent to $\zrf(\mathrm L, \alpha, \beta) = true$.

The parameter $\alpha$ of the functions $\ncells$ and $\zrf$ is called the \textit{momentum ring parameter}, and $\beta$ is called the \textit{energy ring parameter}.

In terms of fumction $\zrf$ the theorem \eqref{threv} has the following form
$$
d\geqslant3\text{~~or~~}(d\geqslant2,N=2^n)~\Rightarrow~\zrf(\lattice^d(N),N,N).
$$

\section{Lemmas}

To simplify the proof we introduce several simple lemmas.
For obvious lemmas, the proofs are skipped.

\begin{lemma}[on increasing of the momentum ring parameter] \label{l1}
\be\label{lemma1-eq}
  \ncells(\mathrm L, \alpha, k) = \sum\limits_{j=0}^{n-1} \ncells(\mathrm L, n\alpha,  k + j\alpha).
\ee
\end{lemma}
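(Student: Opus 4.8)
The plan is to prove the identity by \emph{fibering} the count $\ncells(\mathrm L,\alpha,k)$ over the finer residue information modulo $n\alpha$. For each cell $\boldsymbol{\xi}\in\mathrm L$ write $s(\boldsymbol{\xi})=\sum_{i=1}^d\xi_i^2\in\mathbb{Z}$ for its square-sum; this is a fixed integer attached to the cell, independent of any modulus. The left-hand side of \eqref{lemma1-eq} counts those $\boldsymbol{\xi}$ with $s(\boldsymbol{\xi})\equiv k\md{\alpha}$, while the $j$-th summand on the right counts those with $s(\boldsymbol{\xi})\equiv k+j\alpha\md{n\alpha}$. So the whole statement reduces to a purely arithmetic fact about how a single congruence modulo $\alpha$ splits into $n$ congruences modulo $n\alpha$.

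First I would record the elementary observation that, under the natural projection $\mathbb{Z}(n\alpha)\to\mathbb{Z}(\alpha)$ given by reduction modulo $\alpha$, the fiber over the class of $k$ is exactly
$$
\{\,k,\;k+\alpha,\;k+2\alpha,\;\dots,\;k+(n-1)\alpha\,\}\md{n\alpha}.
$$
These $n$ residues modulo $n\alpha$ are pairwise distinct (two of them, $k+j\alpha$ and $k+j'\alpha$ with $0\leqslant j,j'\leqslant n-1$, are congruent modulo $n\alpha$ only if $j\equiv j'\md{n}$, i.e.\ $j=j'$), and every integer $s$ with $s\equiv k\md{\alpha}$ reduces modulo $n\alpha$ to precisely one of them. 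Equivalently: an integer satisfies $s\equiv k\md{\alpha}$ if and only if there is a \emph{unique} $j\in\{0,\dots,n-1\}$ with $s\equiv k+j\alpha\md{n\alpha}$.

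Next I would use this dichotomy to partition the set of cells counted by the left-hand side. Let $A=\{\boldsymbol{\xi}\in\mathrm L : s(\boldsymbol{\xi})\equiv k\md{\alpha}\}$ and $A_j=\{\boldsymbol{\xi}\in\mathrm L : s(\boldsymbol{\xi})\equiv k+j\alpha\md{n\alpha}\}$. By the previous step, $A$ is the disjoint union $A=\bigsqcup_{j=0}^{n-1}A_j$: each $\boldsymbol{\xi}\in A$ lands in exactly one $A_j$ by uniqueness of $j$, and conversely each $A_j\subseteq A$ since $k+j\alpha\equiv k\md{\alpha}$. Counting elements gives $|A|=\sum_{j=0}^{n-1}|A_j|$, which is exactly \eqref{lemma1-eq} since $|A|=\ncells(\mathrm L,\alpha,k)$ and $|A_j|=\ncells(\mathrm L,n\alpha,k+j\alpha)$.

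I do not anticipate a real obstacle here: the lemma is a bookkeeping statement, and the only point requiring (minimal) care is verifying that the $n$ lifted residue classes modulo $n\alpha$ are simultaneously \emph{disjoint} and \emph{exhaustive} as preimages of $k\md{\alpha}$, so that no cell is double-counted or omitted. I note also that the argument never uses the specific form $s=\sum_i\xi_i^2$; it works for any integer-valued function on $\mathrm L$, which is why the identity holds verbatim for every $\mathrm L$, $k$, $\alpha$ and $n$.
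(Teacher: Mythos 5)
Your proof is correct and follows essentially the same route as the paper: the paper's own (very terse) proof says exactly that $k\in\mathbb{Z}(\alpha)$ corresponds to the residues $k, k+\alpha,\dots,k+(n-1)\alpha$ in $\mathbb{Z}(n\alpha)$ and that one sums the corresponding cell counts. You have merely made explicit the disjointness and exhaustiveness of the fibers, which the paper leaves implicit.
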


\begin{proof}
	The number $k \in \mathbb{Z}(\alpha)$ corresponds to the numbers $k$, $k+\alpha$, $k+2\alpha$, $k+3\alpha$, ... , $k+( n-1)\alpha$ in $\mathbb{Z}(n\alpha)$. To go from modulus $n\alpha$ to modulus $\alpha$, one needs to sum the corresponding numbers of cells.
\end{proof}

\begin{lemma}[on decreasing of the momentum ring parameter] \label{l2}
$$	
\zrf(\mathrm L, n\alpha, \beta)~\Rightarrow~ \zrf(\mathrm L, \alpha, \beta).
$$
\end{lemma}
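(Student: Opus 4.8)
The plan is to derive this implication as an immediate corollary of Lemma~\ref{l1}, since the identity \eqref{lemma1-eq} expresses $\ncells$ at momentum ring parameter $\alpha$ as a finite sum of $\ncells$ values at the larger parameter $n\alpha$. The hypothesis $\zrf(\mathrm L, n\alpha, \beta)$ asserts precisely that every one of those summands is divisible by $\beta$, so divisibility of the sum follows at once; there is no genuine obstacle here beyond bookkeeping of index ranges.

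Concretely, I would assume $\zrf(\mathrm L, n\alpha, \beta)$, which by definition means
$$
  \forall k'\in\mathbb{Z}(n\alpha)\quad \ncells(\mathrm L, n\alpha, k')\equiv 0 \md\beta.
$$
Then I would fix an arbitrary $k\in\mathbb{Z}(\alpha)$ and invoke \eqref{lemma1-eq} to write
$$
  \ncells(\mathrm L, \alpha, k) = \sum_{j=0}^{n-1} \ncells(\mathrm L, n\alpha, k + j\alpha).
$$

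The one point to verify is that each argument $k+j\alpha$ with $0\leqslant j\leqslant n-1$ is a legitimate element of $\mathbb{Z}(n\alpha)$: since $k\in\{0,\dots,\alpha-1\}$ and $j\alpha\in\{0,\alpha,\dots,(n-1)\alpha\}$, we have $k+j\alpha\in\{0,\dots,n\alpha-1\}=\mathbb{Z}(n\alpha)$. Hence the hypothesis applies to every summand, giving $\ncells(\mathrm L, n\alpha, k+j\alpha)\equiv 0\md\beta$ for each $j$, and a sum of multiples of $\beta$ is again a multiple of $\beta$, so $\ncells(\mathrm L, \alpha, k)\equiv 0\md\beta$.

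Finally, since $k\in\mathbb{Z}(\alpha)$ was arbitrary, this establishes $\forall k\in\mathbb{Z}(\alpha)\ \ncells(\mathrm L, \alpha, k)\equiv 0\md\beta$, which is exactly the assertion $\zrf(\mathrm L, \alpha, \beta)$. The whole argument is a direct unpacking of the two definitions combined with Lemma~\ref{l1}, so I expect the proof to be a single short paragraph rather than anything requiring a real idea.
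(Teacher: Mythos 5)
Your proof is correct and is exactly the argument the paper has in mind: the paper lists this as one of the ``obvious lemmas'' whose proof is skipped, and the intended justification is precisely the decomposition of Lemma~\ref{l1} plus the fact that a sum of multiples of $\beta$ is a multiple of $\beta$. Nothing is missing; your bookkeeping that $k+j\alpha$ ranges over $\mathbb{Z}(n\alpha)$ is the only point worth checking, and you checked it.
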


%\begin{proof}
%	If each term in the sum \eqref{lemma1-eq} is a multiple of $\beta$, then the whole sum is a multiple of $\beta$. Therefore, by Lemma \ref{l1} (on increasing of the momentum ring parameter), if $\ncells(\mathrm L, n\alpha, k)$ is a multiple of $\beta$, then $\ncells(\mathrm L, \alpha, k)$ is also a multiple of $\beta$.
%\end{proof}

\begin{lemma}[on decreasing of the energy ring parameter] \label{l3}
$$
 \zrf(\mathrm L, \alpha, n\beta)~\Rightarrow~\zrf(\mathrm L, \alpha, \beta).
$$
\end{lemma}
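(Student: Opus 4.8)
The plan is to unfold both sides directly from the definition of $\zrf$ and then invoke the elementary divisibility fact that $\beta$ divides $n\beta$. The key structural observation is that, in contrast with Lemma~\ref{l2}, the quantity $\ncells(\mathrm L, \alpha, k)$ is literally the \emph{same} on both sides of the implication, and the index $k$ ranges over the same set $\mathbb{Z}(\alpha)$; only the energy ring parameter appearing in the congruence is altered, from $n\beta$ down to $\beta$. Consequently no reindexing of cells and no summation identity such as Lemma~\ref{l1} are required.

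First I would assume the hypothesis $\zrf(\mathrm L, n\alpha, \beta)$ --- that is, $\zrf(\mathrm L, \alpha, n\beta)$ --- which by the definition of $\zrf$ means that for every $k\in\mathbb{Z}(\alpha)$ one has $\ncells(\mathrm L, \alpha, k)\equiv 0\md{n\beta}$, i.e. the integer $n\beta$ divides $\ncells(\mathrm L, \alpha, k)$. Next, since $n\in\mathbb{N}$ and hence $\beta\mid n\beta$, divisibility by $n\beta$ forces divisibility by $\beta$: from $n\beta\mid \ncells(\mathrm L, \alpha, k)$ it follows at once that $\beta\mid \ncells(\mathrm L, \alpha, k)$, that is $\ncells(\mathrm L, \alpha, k)\equiv 0\md{\beta}$. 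Because $k$ was an arbitrary element of $\mathbb{Z}(\alpha)$, this is exactly the assertion $\zrf(\mathrm L, \alpha, \beta)$, which closes the argument.

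I do not expect any genuine obstacle here, as the entire content reduces to the transitivity of divisibility. The only point that warrants a moment of care is the standing convention $n\in\mathbb{N}$, which guarantees that $n\beta$ is an honest multiple of $\beta$; this matches the usage of $n$ in the neighbouring lemmas, so the implication holds without further hypotheses.
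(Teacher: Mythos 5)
Your proof is correct and is exactly the argument the paper has in mind: this is one of the lemmas whose proof the paper skips as obvious, and the content is just that $n\beta\mid \ncells(\mathrm L,\alpha,k)$ implies $\beta\mid \ncells(\mathrm L,\alpha,k)$ for each $k\in\mathbb{Z}(\alpha)$, with no reindexing needed. One slip worth fixing: you write that the hypothesis is ``$\zrf(\mathrm L, n\alpha, \beta)$ --- that is, $\zrf(\mathrm L, \alpha, n\beta)$,'' but these are \emph{not} the same statement (the former is the hypothesis of Lemma~\ref{l2}); since the rest of your argument uses the correct hypothesis $\zrf(\mathrm L, \alpha, n\beta)$, simply delete the first expression.
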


%\begin{proof}
%	If a number is a multiple of $n\beta$, then it is a multiple of $\beta$. 
%	Therefore, if $\ncells(\mathrm L, \alpha, k)$ is a multiple of $n\beta$, 
%	then $\ncells(\mathrm L, \alpha, k)$ is also a multiple of $\beta$.
%\end{proof}

\begin{lemma}[on lattice stretching] \label{l4}
$$
  \zrf(\mathrm L, \alpha, \beta)~\Rightarrow~{\zrf(n \cdot \mathrm L, n^2\alpha, \beta)}.
 $$
\end{lemma}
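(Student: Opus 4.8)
The plan is to unwind the definition of $\ncells$ and track how the sum of squares transforms under the stretching map $\boldsymbol{\xi}\mapsto n\boldsymbol{\xi}$. First I would note that every cell of $n\cdot\mathrm L$ has the form $n\boldsymbol{\xi}$ with $\boldsymbol{\xi}\in\mathrm L$, and under this map the quadratic value scales as $\sum_{i=1}^d (n\xi_i)^2 = n^2\sum_{i=1}^d \xi_i^2$. Hence the quadratic value of \emph{any} cell in $n\cdot\mathrm L$ is automatically a multiple of $n^2$.

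Next I would fix an arbitrary target residue $k'\in\mathbb{Z}(n^2\alpha)$ and split into two cases according to whether $n^2$ divides $k'$. If $n^2\nmid k'$, then the congruence $n^2\sum_{i=1}^d\xi_i^2\equiv k'\md{n^2\alpha}$ has no solution, since its left-hand side and the modulus $n^2\alpha$ are both multiples of $n^2$ while $k'$ is not; therefore $\ncells(n\cdot\mathrm L, n^2\alpha, k')=0$, which is $\equiv0\md\beta$ trivially.

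If instead $k'=n^2 k''$ with $k''\in\mathbb{Z}(\alpha)$, then the congruence $n^2\sum_{i=1}^d\xi_i^2\equiv n^2 k''\md{n^2\alpha}$ is equivalent, upon cancelling the common factor $n^2$ from both sides and the modulus, to $\sum_{i=1}^d\xi_i^2\equiv k''\md\alpha$. This yields the exact identity $\ncells(n\cdot\mathrm L, n^2\alpha, n^2 k'')=\ncells(\mathrm L,\alpha,k'')$, reducing a stretched count to an original count. Invoking the hypothesis $\zrf(\mathrm L,\alpha,\beta)$ then gives $\ncells(n\cdot\mathrm L, n^2\alpha, k')\equiv0\md\beta$. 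Since $k'$ was arbitrary, $\zrf(n\cdot\mathrm L, n^2\alpha,\beta)$ follows, as required.

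The only step needing genuine care — the main obstacle, modest as it is — is the cancellation of $n^2$ in the congruence modulo $n^2\alpha$: one must check that dividing both sides and the modulus by $n^2$ is a valid \emph{if and only if} transformation, and that residues $k'\in\mathbb{Z}(n^2\alpha)$ divisible by $n^2$ are in bijection with $k''\in\mathbb{Z}(\alpha)$ via $k'=n^2 k''$. Both points are elementary, so the lemma is in essence a bookkeeping argument showing that scaling the lattice and the momentum ring parameter in tandem preserves, up to relabelling of residues, the distribution of quadratic values.
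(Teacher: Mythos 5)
Your proof is correct. The paper lists this lemma among the ``obvious'' ones and omits its proof entirely; your argument --- the bijection $\boldsymbol{\xi}\mapsto n\boldsymbol{\xi}$ between $\mathrm L$ and $n\cdot\mathrm L$, the vanishing of $\ncells(n\cdot\mathrm L,n^2\alpha,k')$ for $k'$ not divisible by $n^2$, and the cancellation $n^2 S\equiv n^2k''\pmod{n^2\alpha}\iff S\equiv k''\pmod{\alpha}$ --- is precisely the bookkeeping the authors evidently had in mind.
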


%\begin{proof}
%	If the lattice is stretched by $n$ times, the cell coordinates are multiplied by $n$:
%$$
%  (\xi_1, \xi_2, \xi_3, \ldots) \rightarrow (n\xi_1, n\xi_2, n\xi_3, \ldots).
%$$
%	The sum of squared coordinates increases by $n^2$ times:
%$$
%  \xi_1^2 + \xi_2^2 + \xi_3^2 + \ldots \rightarrow (n\xi_1)^2 + (n\xi_2)^2 + (n\xi_3)^2 + \ldots = n^2(\xi_1^2 + \xi_2^2 + \xi_3^2 + \ldots).
%$$
%	Therefore $\ncells(\mathrm L, \alpha, k) = \ncells(n \cdot \mathrm L, n^2\alpha, n^2k)$.
%\end{proof}

%\begin{lemma} \label{l9}
%$\zrf(n \cdot l, n^2, card(n \cdot l))$.
%\end{lemma}
%
%\begin{proof}
%При растяжении в $n$ раз координаты ячеек умножаются на $n$:
%$$(\xi_1, \xi_2, \xi_3, \ldots) \rightarrow (n\xi_1, n\xi_2, n\xi_3, \ldots)$$
%При этом сумма квадратов координат увеличивается в $n^2$ раз:
%$$\xi_1^2 + \xi_2^2 + \xi_3^2 + \ldots \rightarrow (n\xi_1)^2 + (n\xi_2)^2 + (n\xi_3)^2 + \ldots = n^2(\xi_1^2 + \xi_2^2 + \xi_3^2 + \ldots)$$
%То есть сумма квадратов координат по модулю $n^2$ равна $0$, поэтому
%$$\ncells(n \cdot l, n^2, x) =  \begin{cases}card(n \cdot l), & x=0\\0, & x \neq 0\end{cases}$$
%Так как $card(n \cdot l)$ и $0$ кратны $card(n \cdot l)$,
%$$\zrf(n \cdot l, n^2, card(n \cdot l))$$
%\end{proof}
%
\begin{lemma}[on lattice shift] \label{l10}
$$
  \ncells(n\cdot\mathrm L+n\alpha\mathbf{a},n^2\alpha,k) = \ncells(n\cdot\mathrm L, n^2\alpha,k).
$$
\end{lemma}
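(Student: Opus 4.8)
The plan is to exhibit an explicit, congruence-preserving bijection between the two lattices and then observe that $\ncells$ does nothing more than sort points by the residue class of their sum of squares. Both sets $n\cdot\mathrm L$ and $n\cdot\mathrm L+n\alpha\mathbf{a}$ are indexed by the same parameter set $\mathrm L$, via $\boldsymbol{\xi}\mapsto n\boldsymbol{\xi}$ and $\boldsymbol{\xi}\mapsto n\boldsymbol{\xi}+n\alpha\mathbf{a}$ respectively. Each map is injective on $\mathbb{Z}^d$ (multiplication by $n$ followed by a fixed translation), so it suffices to check that the partnered points $n\boldsymbol{\xi}$ and $n\boldsymbol{\xi}+n\alpha\mathbf{a}$ always have the same sum of squares modulo $n^2\alpha$. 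Once that is established, for every residue $k$ a point of one set contributes to $\ncells(\,\cdot\,,n^2\alpha,k)$ if and only if its partner in the other set does, and the two counts are forced to agree.

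First I would compute the sum of squares of a shifted point. Expanding gives
$$
\sum_{i=1}^d (n\xi_i+n\alpha a_i)^2
= n^2\sum_{i=1}^d \xi_i^2
+ 2n^2\alpha\sum_{i=1}^d \xi_i a_i
+ n^2\alpha^2\sum_{i=1}^d a_i^2 .
$$
The last two terms are each divisible by $n^2\alpha$, hence vanish modulo $n^2\alpha$, leaving
$$
\sum_{i=1}^d (n\xi_i+n\alpha a_i)^2 \equiv n^2\sum_{i=1}^d \xi_i^2 \md{n^2\alpha}.
$$
The unshifted partner gives exactly $\sum_{i=1}^d (n\xi_i)^2 = n^2\sum_{i=1}^d \xi_i^2$, so the two points are congruent modulo $n^2\alpha$, which is precisely the required property of the bijection, and the lemma follows.

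There is no genuine obstacle here; the only point worth emphasising is that the translation is by $n\alpha\mathbf{a}$, that is, a multiple of $n\alpha$ rather than an arbitrary vector. This is exactly what forces both the cross term $2n^2\alpha\sum_i\xi_i a_i$ and the pure-shift term $n^2\alpha^2\sum_i a_i^2$ to be multiples of $n^2\alpha$; a shift by merely $\alpha\mathbf{a}$ would leave a cross term that need not be divisible by $n^2\alpha$, and the congruence-preserving bijection would break. With the shift in the stated form the computation is immediate and the equality of the two $\ncells$ values is obtained term by term.
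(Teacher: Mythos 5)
Your proof is correct and follows essentially the same route as the paper's own argument: both expand the sum of squares of the shifted point $n\boldsymbol{\xi}+n\alpha\mathbf{a}$, observe that the cross term $2n^2\alpha\sum_i\xi_i a_i$ and the shift term $n^2\alpha^2\sum_i a_i^2$ are divisible by $n^2\alpha$, and conclude that the shift preserves each residue class of the sum of squares, hence the cell counts. Your explicit framing of this as a congruence-preserving bijection indexed by $\mathrm L$ is a slightly more formal packaging of what the paper leaves implicit, but the substance is identical.
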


\begin{proof}
	Let $\xi_i$ be the coordinates of a cell in the set $\mathrm L$. 
	$n\xi_i$ are the coordinates of a cell in the set $n\cdot \mathrm L$. 
	When shifted by the vector $n\alpha\mathbf{a}$, the coordinates of the cells of the set 
	$n\cdot\mathrm L$ are shifted by the corresponding components of this vector:
$$
  (n\xi_1,n\xi_2,n\xi_3,\ldots)\rightarrow(n\xi_1+n\alpha a_1,n\xi_2+n\alpha a_2,n\xi_3+n\alpha a_3,\ldots).
$$
	The sum of the squares of the coordinates changes as follows:
$$
  (n\xi_1)^2+(n\xi_2)^2+(n\xi_3)^2+\ldots\rightarrow
  (n\xi_1+n\alpha a_1)^2+(n\xi_2+n\alpha a_2)^2+(n\xi_3+n\alpha a_3)^2+\ldots.
$$
	Increment of the sum of squares:
\bean
  (n\xi_1+n\alpha a_1)^2+(n\xi_2+n\alpha a_2)^2+(n\xi_3+n\alpha a_3)^2+\ldots
  -(n\xi_1)^2-(n\xi_2)^2-(n\xi_3)^2-\ldots=\\
  =2n^2\alpha a_1\xi_1+n^2\alpha^2 a_1^2+2n^2\alpha a_2\xi_2+n^2\alpha^2 a_2^2+2n^2\alpha a_3\xi_3
  +n^2 \alpha^2 a_3^2+\ldots=\\
  =n^2\alpha(2a_1\xi_1+\alpha a_1^2+2a_2\xi_2+\alpha a_2^2+2a_3\xi_3+\alpha a_3^2+\ldots)
  \equiv 0 \md{n^2\alpha}.
\eean
%Since the increment of the sum of squares is a multiple of $k^2 \alpha$, the change in the sum of squares modulo $k^2 \alpha$ is equal to $0$. 
	Therefore
$$
  \ncells(n\cdot \mathrm L+n\alpha\mathbf{a},n^2\alpha,k) = \ncells(n\cdot\mathrm L,n^2\alpha,k).
$$
\end{proof}

\begin{lemma}[on sum of lattices] \label{l5}
$$
  \mathrm L_1 \cap \mathrm L_2 = \varnothing,~\zrf(\mathrm L_1, \alpha, \beta), \zrf(\mathrm L_2, \alpha, \beta)~
  \Rightarrow~\zrf(\mathrm L_1 \cup \mathrm L_2, \alpha, \beta).
$$
\end{lemma}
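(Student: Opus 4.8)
The plan is to reduce the statement to the additivity of the counting function $\ncells$ over disjoint sets. The key observation I would establish first is that, for disjoint $\mathrm L_1$ and $\mathrm L_2$, the number of cells in the union having a prescribed sum-of-squares residue splits as a sum:
\be
  \ncells(\mathrm L_1 \cup \mathrm L_2, \alpha, k) = \ncells(\mathrm L_1, \alpha, k) + \ncells(\mathrm L_2, \alpha, k).
\ee
This follows directly from the definition of $\ncells$: the cells $\boldsymbol{\xi}\in\mathrm L_1\cup\mathrm L_2$ satisfying $\sum_{i=1}^d \xi_i^2\equiv k\md\alpha$ are partitioned, because $\mathrm L_1\cap\mathrm L_2=\varnothing$, into those lying in $\mathrm L_1$ and those lying in $\mathrm L_2$, with no cell counted twice.

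Second, I would invoke the two hypotheses $\zrf(\mathrm L_1,\alpha,\beta)$ and $\zrf(\mathrm L_2,\alpha,\beta)$, which by definition assert that for every $k\in\mathbb{Z}(\alpha)$ both $\ncells(\mathrm L_1,\alpha,k)\equiv 0\md\beta$ and $\ncells(\mathrm L_2,\alpha,k)\equiv 0\md\beta$. Adding these two congruences and using the additivity identity above yields $\ncells(\mathrm L_1\cup\mathrm L_2,\alpha,k)\equiv 0\md\beta$ for every $k\in\mathbb{Z}(\alpha)$, which is precisely the assertion $\zrf(\mathrm L_1\cup\mathrm L_2,\alpha,\beta)$.

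There is no genuine obstacle here: the entire content is the disjoint-union additivity of a counting function, and the passage to the conclusion is the elementary fact that a sum of two multiples of $\beta$ is again a multiple of $\beta$. The only point that warrants a word of care is that the decomposition of the union into $\mathrm L_1$-cells and $\mathrm L_2$-cells is a genuine partition, which is exactly what the disjointness hypothesis $\mathrm L_1\cap\mathrm L_2=\varnothing$ guarantees; without it the identity would overcount the cells lying in the intersection, and the conclusion could fail.
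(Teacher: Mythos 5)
Your proof is correct: the paper marks this lemma as obvious and omits its proof entirely, and your argument --- disjoint-union additivity of $\ncells$ followed by adding the two congruences modulo $\beta$ --- is exactly the intended elementary argument. Nothing is missing, including your correct remark that disjointness is what makes the cell count additive rather than an overcount.
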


%\begin{proof}
%	For union of disjoint sets %of cells, the cell counts are summed. Therefore
%$$
%  \ncells(\mathrm L_1\cup\mathrm L_2,\alpha,k)=\ncells(\mathrm L_1,\alpha,k)+\ncells(\mathrm L_2, \alpha,k).
%$$
%	If all terms are multiples of $\beta$, then the sum is also a multiple of $\beta$.
%\end{proof}

\begin{lemma}[on subtraction of lattices] \label{l6}
$$
\mathrm L_1 \cap \mathrm L_2 = \varnothing,~\zrf(\mathrm L_1 \cup \mathrm L_2, \alpha, \beta), \zrf(\mathrm L_2, \alpha, \beta)~
\Rightarrow~\zrf(\mathrm L_1, \alpha, \beta).
$$
\end{lemma}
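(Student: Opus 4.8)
The plan is to reduce the whole statement to the additivity of the cell-counting function over disjoint unions. First I would observe that, because $\mathrm L_1\cap\mathrm L_2=\varnothing$, every cell $\boldsymbol{\xi}\in\mathrm L_1\cup\mathrm L_2$ belongs to exactly one of the two subsets, while the defining condition $\sum_{i=1}^d\xi_i^2\equiv k\md\alpha$ is a property of $\boldsymbol{\xi}$ alone and is insensitive to which subset we regard it as living in. Consequently, for every $k\in\mathbb{Z}(\alpha)$ the count over the union splits without overlap:
\be\label{l6-additivity}
  \ncells(\mathrm L_1\cup\mathrm L_2,\alpha,k)=\ncells(\mathrm L_1,\alpha,k)+\ncells(\mathrm L_2,\alpha,k).
\ee
This is precisely the additivity that underlies Lemma~\ref{l5}; the subtraction lemma is just its rearrangement, so I would solve \eqref{l6-additivity} for the first term rather than the sum.

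Next I would rewrite \eqref{l6-additivity} as $\ncells(\mathrm L_1,\alpha,k)=\ncells(\mathrm L_1\cup\mathrm L_2,\alpha,k)-\ncells(\mathrm L_2,\alpha,k)$ and invoke the two hypotheses. By $\zrf(\mathrm L_1\cup\mathrm L_2,\alpha,\beta)$ the first term on the right is $\equiv0\md\beta$ for every $k$, and by $\zrf(\mathrm L_2,\alpha,\beta)$ so is the second. Since all three counts are honest nonnegative integers, their difference is taken in $\mathbb{Z}$, where a difference of two multiples of $\beta$ is again a multiple of $\beta$. Hence $\ncells(\mathrm L_1,\alpha,k)\equiv0\md\beta$ for all $k\in\mathbb{Z}(\alpha)$, which is exactly the conclusion $\zrf(\mathrm L_1,\alpha,\beta)$.

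I do not expect any genuine obstacle here: the content is the elementary fact that the residue of a count modulo $\beta$ is stable under passing from a disjoint union to either of its parts. The single point worth stating explicitly is the disjointness hypothesis, which is what makes the splitting \eqref{l6-additivity} exact with no double-counting. Without it one would have to correct by an inclusion--exclusion term $\ncells(\mathrm L_1\cap\mathrm L_2,\alpha,k)$, and the clean subtraction would fail. I would therefore keep the proof to the two lines above, in parallel with the proof of Lemma~\ref{l5}.
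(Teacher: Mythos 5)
Your proof is correct; the paper actually omits the proof of this lemma entirely, listing it among the ``obvious'' lemmas, and your argument via additivity of $\ncells$ over disjoint unions followed by subtraction of two multiples of $\beta$ is exactly the intended elementary reasoning. Nothing further is needed.
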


%\begin{proof}
%	Similar to the proof of Lemma \ref{l5} (on sum of lattices).
%, i.e.
%$$
%  \ncells(\mathrm L_1 \cup l_2, \alpha, x) = \ncells(\mathrm L_1, \alpha, x) + \ncells(\mathrm L_2, \alpha, x),
%$$
%$$
%  \ncells(\mathrm L_1, \alpha, x) = \ncells(\mathrm L_1 \cup \mathrm L_2, \alpha, x) - \ncells(\mathrm L_2, \alpha, x).
%$$
%	If all terms are multiples of $\beta$, then the sum is also a multiple of $\beta$.
%\end{proof}

\begin{lemma}[on multiple lattices] \label{l7}~\\
	Let $\mathrm L_i \cap \mathrm L_j = \varnothing$ if $i \neq j$, and 
	$\ncells(\mathrm L_i,\alpha,k) = \ncells(\mathrm L_j,\alpha,k)$.
	Then 
$$	
  \zrf(\mathrm L_1, \alpha, \beta)~\Rightarrow~\zrf(\bigcup\limits_{j=1}^{n} \mathrm L_j, \alpha, n\beta).
$$
\end{lemma}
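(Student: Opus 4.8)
The plan is to reduce the statement to the additivity of $\ncells$ over disjoint unions, combined with the assumption that all the pieces $\mathrm L_j$ carry identical cell counts. First I would note that, since the $\mathrm L_j$ are pairwise disjoint, a cell of $\bigcup_{j=1}^{n}\mathrm L_j$ satisfying $\sum_i\xi_i^2\equiv k\md\alpha$ belongs to exactly one of the $\mathrm L_j$. Hence the count over the union splits as a sum of the counts over the pieces, and for every $k\in\mathbb{Z}(\alpha)$,
$$
  \ncells\Bigl(\bigcup_{j=1}^{n}\mathrm L_j,\alpha,k\Bigr)=\sum_{j=1}^{n}\ncells(\mathrm L_j,\alpha,k).
$$
This additivity is the only structural fact needed, and it follows directly from the definition of $\ncells$ as a count of cells meeting a fixed residue condition.

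Next I would invoke the hypothesis $\ncells(\mathrm L_i,\alpha,k)=\ncells(\mathrm L_j,\alpha,k)$: every summand equals $\ncells(\mathrm L_1,\alpha,k)$, so the right-hand side collapses to $n\cdot\ncells(\mathrm L_1,\alpha,k)$. Finally, the hypothesis $\zrf(\mathrm L_1,\alpha,\beta)$ gives $\ncells(\mathrm L_1,\alpha,k)\equiv0\md\beta$, i.e. $\ncells(\mathrm L_1,\alpha,k)=\beta t$ for some integer $t$; multiplying by $n$ yields $n\beta t\equiv0\md{n\beta}$. As this holds for every $k\in\mathbb{Z}(\alpha)$, the defining condition of $\zrf(\bigcup_{j=1}^{n}\mathrm L_j,\alpha,n\beta)$ is satisfied, which is exactly the assertion.

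I expect no genuine obstacle in this lemma. The only point deserving explicit mention is the additivity step, which rests squarely on the disjointness hypothesis $\mathrm L_i\cap\mathrm L_j=\varnothing$; without it, cells lying in overlaps would be multiply counted and the union count would in general fail to equal the sum of the piece counts, so the argument would break down.
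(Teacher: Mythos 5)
Your proof is correct and is exactly the straightforward argument the paper has in mind: it omits a proof of this lemma precisely because it follows from disjoint additivity of $\ncells$, the equal-counts hypothesis, and the divisibility $\beta \mid \ncells(\mathrm L_1,\alpha,k)$, which is what you wrote. No gaps; your remark that disjointness is the essential ingredient for the additivity step is also the right observation.
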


%\begin{proof}
%	For union of disjoint sets of cells, since all $\ncells(\mathrm L_j,\alpha,k)$ are equal to each other,
%$$
%  \ncells(\bigcup\limits_{j=1}^{n} \mathrm L_j, \alpha,k) 
%  =\sum_{j=1}^{n} \ncells(\mathrm L_j,\alpha,k) = n \cdot \ncells(\mathrm L_1,\alpha,k).
%$$
%	Since $\ncells(\mathrm L_1,\alpha,k)$ is a multiple of $\beta$, 
%	then $n\cdot\ncells(\mathrm L_1,\alpha,k)$ is a multiple of $n\beta$.
%\end{proof}

\begin{lemma}[on periodicity] \label{l8}~\\ 
Let 
$$
\forall k \in \mathbb{Z}(P\alpha)~~
{\ncells(\mathrm L, P\alpha, k \oplus P) 
= \ncells(\mathrm L, P\alpha, k)},
$$
where $\oplus$ is addition in the ring $\mathbb{Z}(P\alpha)$. Then 
$$ 
  \zrf(\mathrm L, P, \alpha \beta)~\Rightarrow~\zrf(\mathrm L, P\alpha, \beta).
$$
\end{lemma}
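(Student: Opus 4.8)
The plan is to convert the periodicity hypothesis into an exact identity relating $\ncells$ at the fine modulus $P\alpha$ to $\ncells$ at the coarse modulus $P$, and then read off the required divisibility. The engine is Lemma~\ref{l1}.

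First I would observe that the hypothesis makes the function $k\mapsto\ncells(\mathrm L, P\alpha, k)$ invariant under adding $P$ in $\mathbb{Z}(P\alpha)$; iterating, it is invariant under adding any multiple of $P$. The map $+P$ partitions $\mathbb{Z}(P\alpha)$ into the $P$ orbits $\{k, k+P, \dots, k+(\alpha-1)P\}$ indexed by $k\in\{0,\dots,P-1\}$, and $\ncells(\mathrm L, P\alpha,\cdot)$ is constant on each orbit. Hence it suffices to prove $\ncells(\mathrm L, P\alpha, k)\equiv0\md\beta$ for representatives $k\in\mathbb{Z}(P)$, the value for a general $k'\in\mathbb{Z}(P\alpha)$ being equal to that at $k=k'\bmod P$.

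Next I would apply Lemma~\ref{l1} with its coarse modulus taken to be $P$ and its refinement factor taken to be $\alpha$ (so its fine modulus is $P\alpha$), giving, for each $k\in\mathbb{Z}(P)$,
$$
  \ncells(\mathrm L, P, k)=\sum_{j=0}^{\alpha-1}\ncells(\mathrm L, P\alpha, k+jP).
$$
By the periodicity just noted, every one of the $\alpha$ summands equals $\ncells(\mathrm L, P\alpha, k)$, so the sum collapses to the clean identity
$$
  \ncells(\mathrm L, P, k)=\alpha\cdot\ncells(\mathrm L, P\alpha, k).
$$

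Finally I would feed in the hypothesis $\zrf(\mathrm L, P, \alpha\beta)$, which says $\ncells(\mathrm L, P, k)\equiv0\md{\alpha\beta}$, i.e. $\ncells(\mathrm L, P, k)=\alpha\beta\,m$ for some integer $m$. Substituting the collapsed identity gives $\alpha\cdot\ncells(\mathrm L, P\alpha, k)=\alpha\beta\,m$, and since $\alpha$ is a positive integer I may cancel it over $\mathbb{Z}$ to obtain $\ncells(\mathrm L, P\alpha, k)=\beta\,m\equiv0\md\beta$. As $k$ (hence every $k'\in\mathbb{Z}(P\alpha)$) was arbitrary, this is exactly $\zrf(\mathrm L, P\alpha, \beta)$. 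The whole argument is short; the only point demanding care is the bookkeeping in matching the index roles of Lemma~\ref{l1} to the present parameters, and checking that the orbit reduction genuinely covers every residue class so that cancelling $\alpha$ is a legitimate integer division rather than an illicit modular one.
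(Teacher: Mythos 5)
Your proposal is correct and is essentially the paper's own proof: both apply Lemma~\ref{l1} with coarse modulus $P$ and refinement factor $\alpha$, use the periodicity hypothesis to collapse the sum to $\ncells(\mathrm L, P, k) = \alpha\cdot\ncells(\mathrm L, P\alpha, k)$, and then cancel $\alpha$ to convert divisibility by $\alpha\beta$ into divisibility by $\beta$. Your extra care about the orbit structure of $+P$ and the legitimacy of the integer cancellation is sound but adds nothing beyond what the paper's terser argument already contains.
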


\begin{proof}
	By Lemma \ref{l1} (on increasing of the momentum ring parameter) and taking into account the periodicity of $\ncells(\mathrm L, P\alpha, k)$,
$$
  \ncells(\mathrm L, P,k) = \sum\limits_{j=0}^{\alpha-1} \ncells(\mathrm L, P\alpha, k + jP) 
  = \alpha \cdot \ncells(\mathrm L, P\alpha, k),
$$
$$
  \frac{\ncells(\mathrm L, P, k)}{\alpha} = \ncells(\mathrm L, P\alpha, k).
$$
	If $\ncells(\mathrm L, P, k)$ is a multiple of $\alpha\beta$, 
	then $\ncells(\mathrm L, P\alpha, k)$ is a multiple of $\beta$.
\end{proof}

\begin{lemma}[on permutation of remainders] \label{l11}
	Let $\alpha$ and $n$ be coprime, $A$ be the set the remainders of the division of elements $n\mathbb{Z}(\alpha)+a$ by $\alpha$. Then $A = \mathbb{Z}(\alpha)$.
\end{lemma}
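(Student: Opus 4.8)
The plan is to recognize the set $n\mathbb{Z}(\alpha)+a=\{a+kn : k\in\mathbb{Z}(\alpha)\}$ as the image of $\mathbb{Z}(\alpha)$ under the affine map $\phi(k)\equiv a+kn\md{\alpha}$, and to show that $\phi$ is a bijection of $\mathbb{Z}(\alpha)$ onto itself. Since $A$ is by definition the set of remainders modulo $\alpha$ of the $\alpha$ elements $a+kn$ with $k=0,1,\dots,\alpha-1$, we have $A=\{\phi(k):k\in\mathbb{Z}(\alpha)\}$, so it suffices to prove that $\phi$ maps $\mathbb{Z}(\alpha)$ onto all of $\mathbb{Z}(\alpha)$; then $A=\mathbb{Z}(\alpha)$ follows at once.

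First I would establish injectivity of $\phi$. Suppose $\phi(k_1)=\phi(k_2)$ for $k_1,k_2\in\mathbb{Z}(\alpha)$; then $a+k_1 n\equiv a+k_2 n\md{\alpha}$, hence $n(k_1-k_2)\equiv0\md{\alpha}$. Because $\alpha$ and $n$ are coprime, $n$ is invertible modulo $\alpha$: by B\'ezout's identity there exist integers $u,v$ with $un+v\alpha=1$, so $u$ serves as $n^{-1}$ in $\mathbb{Z}(\alpha)$. Multiplying the congruence by $u$ yields $k_1\equiv k_2\md{\alpha}$, and since $k_1,k_2$ both lie in $\{0,1,\dots,\alpha-1\}$ this forces $k_1=k_2$.

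I would then conclude by a counting argument: $\phi$ is an injective map from the finite set $\mathbb{Z}(\alpha)$, of cardinality $\alpha$, into $\mathbb{Z}(\alpha)$, which has the same cardinality, so $\phi$ is automatically surjective. Hence every residue in $\{0,1,\dots,\alpha-1\}$ occurs as some $\phi(k)$, i.e. $A=\mathbb{Z}(\alpha)$.

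The statement is elementary and I anticipate no genuine obstacle; the only load-bearing point is the invertibility of $n$ modulo $\alpha$, which is exactly where the hypothesis $\gcd(n,\alpha)=1$ enters. Without it the claim fails, since a common factor of $n$ and $\alpha$ would confine the image of $\phi$ to a proper coset of the subgroup generated by that factor rather than filling all of $\mathbb{Z}(\alpha)$.
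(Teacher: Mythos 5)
Your proof is correct. Note that the paper itself gives no proof of this lemma at all---it is one of the lemmas whose proofs are ``skipped'' as obvious---so there is nothing to compare against; your argument (injectivity of the affine map $k\mapsto a+kn$ modulo $\alpha$ via the invertibility of $n$, guaranteed by B\'ezout's identity, followed by the pigeonhole conclusion that an injection of the finite set $\mathbb{Z}(\alpha)$ into itself is a bijection) is exactly the standard reasoning the authors evidently had in mind, and it correctly identifies coprimality of $n$ and $\alpha$ as the essential hypothesis.
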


%\begin{proof}
%	The sets $n\mathbb{Z}(\alpha)+a$ and $\mathbb{Z}(\alpha)$ have the same number of elements.
%	So one need to prove
%$$
%  \forall y_1\not=y_2\in n\mathbb{Z}(\alpha)+a~~~~y_1\not\equiv y_2\md{\alpha}.
%$$
%   $y_1=nz_1+a$ and $y_2=nz_2+a$, where $z_1,z_2\in\mathbb{Z}(\alpha)$.
%$$
%   y_1-y_2=(nz_1+a)-(nz_2+a)=n(z_1-z_2).
%$$
%  	Since $\alpha$ and $n$ are coprime,
%$$
%  n(z_1-z_2)\equiv0\md{\alpha}~\Leftrightarrow~z_1-z_2\equiv0\md{\alpha}.
%$$
%   Thus $A = \mathbb{Z}(\alpha).$
%\end{proof}

\section{Proof-2}

Theorem on number-theory renormalization (\ref{threv}) formulated in terms of the function $\zrf$: for $d\geqslant3$ and any natural $N$, as well as for $d=2$ and $N=2^ m, m \in \mathbb{N}$
\begin{equation}\label{eq4}
   \zrf(\lattice^d(N), N, N).
\end{equation}

The theorem is obvious in the trivial case $N=1$, easily verified for $N=2$, and proved in the Appendix \ref{N=p} for odd primes $N$. It remains to prove the theorem for composite $N$.

We will prove the theorem separately for each of the following $3$ cases of composite $N$:
\begin{itemize}
	\item
	$N$ is a power of an odd prime $p$,
	\item
	$N$ is a power of $2$,
	\item
	$N$ is not a prime power.
\end{itemize}

\subsection{Powers of odd primes}

Let us prove the theorem (\ref{eq4}) for the $m$-th power of an odd prime $p$:
\begin{equation}\label{eq1}
  N = p^m.
\end{equation}

Unless otherwise specifically stated, we will assume that $d=3$.

We will prove by induction, assuming that the hypothesis is true for $p^{m-1}$ и для $p^{m-2}$:
\begin{equation}\label{eq5}
  \zrf(\lattice^d(p^{m-1}), p^{m-1}, p^{m-1}),
\end{equation}
\begin{equation}\label{eq6}
  \zrf(\lattice^d(p^{m-2}), p^{m-2}, p^{m-2}).
\end{equation}
\par
The basis of the induction is the proven above validity of the hypothesis for $d=3$ for ${m=0,1}$, i.e. for $N=1$ and $N=p$:
\begin{equation}\label{eq8}
  \zrf(\lattice^d(1), 1, 1),
\end{equation}
\begin{equation}\label{eq7}
  \zrf(\lattice^d(p), p, p).
\end{equation}

\begin{figure}[H]
	\centering
	\includegraphics[width=1\linewidth]{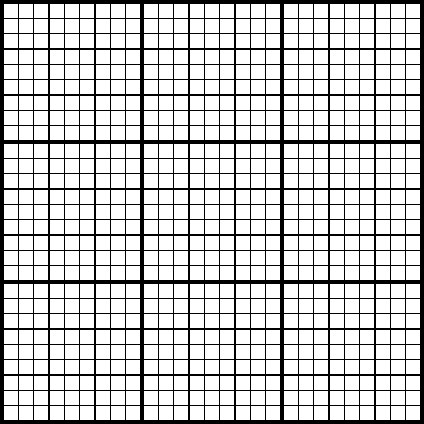}
	\caption{2-dimensional $\lattice^2(N)$ (the section of $\lattice^3(N)$). The origin is in the lower left corner. Here is an example $N=3^3=27$.}
	\label{fig:p1}
\end{figure}

Divide the $\lattice^d(N)$ (Fig.~\ref{fig:p1}) into 2 parts: $p \cdot \lattice^d(p^{m-1})$ and $\lattice^d(N ) - p \cdot \lattice^d(p^{m-1})$.

The first part ($p \cdot \lattice^d(p^{m-1})$) consists of all cells of $\lattice^d(N)$ such that each of their coordinates $\xi_i$ is a multiple of $p$ (Fig.~\ref{fig:p2}).

The second part ($\lattice^d(N) - p \cdot \lattice^d(p^{m-1})$) consists of all other cells of $\lattice^d(N)$, i.e. those with at least one of their coordinates $\xi_i$ is not a multiple of $p$ (Fig.~\ref{fig:p3}).

Similarly, we divide the $\lattice^d(p^{m-1})$ into 2 parts: $p \cdot \lattice^d(p^{m-2})$ and $\lattice^d(p^{m-1 }) - p \cdot \lattice^d(p^{m-2})$. (Fig.~\ref{fig:p4})

\begin{figure}[H]
	\centering
	\includegraphics[width=1\linewidth]{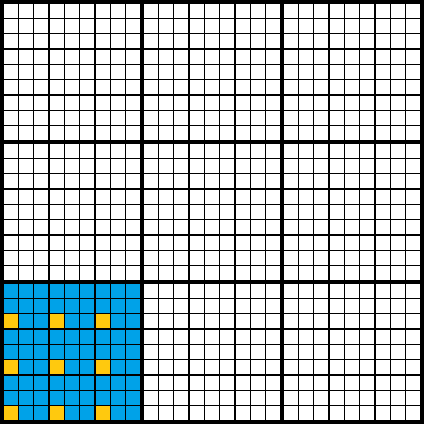}
	\caption{$\lattice^2(p^{m-1})$ (orange and blue cells) divided into parts $p \cdot \lattice^2(p^{m-2})$ (orange cells) and ${\lattice ^2(p^{m-1}) - p \cdot \lattice^2(p^{m-2})}$ (blue cells).}
	\label{fig:p4}
\end{figure}

\par
We will prove zerofying for each of the 2 parts of $\lattice^d(N)$ separately:
\begin{equation}\label{eq11}
  \zrf(p \cdot \lattice^d(p^{m-1}), N, N),
\end{equation}
\begin{equation}\label{eq12}
  \zrf(\lattice^d(N) - p \cdot \lattice^d(p^{m-1}), N, N),
\end{equation}
and then, by Lemma \ref{l5} (on sum of lattices), zerofying will be proved for the entire $\lattice^d(N)$ (\ref{eq4}).

\subsubsection*{The first part of the lattice}

\begin{figure}[H]
	\centering
	\includegraphics[width=1\linewidth]{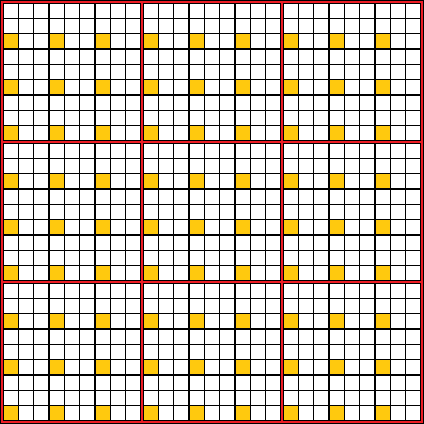}
	\caption{The first part ($p \cdot \lattice^2(p^{m-1})$) of $\lattice^2(N)$ (orange cells). The red frames show the partition of $p \cdot \lattice^2(p^{m-1})$ into sets $p \cdot \lattice^2(p^{m-2}) + p^{m-1}\mathbf{ a}$, where $\mathbf{a} \in \lattice^2(p)$.}
	\label{fig:p2}
\end{figure}

Since the set $p \cdot \lattice^d(p^{m-2})$ is $\lattice^d(p^{m-2})$ stretched by $p$ times, by virtue of the hypothesis for $ p^{m-2}$ (\ref{eq6}), Lemma \ref{l4} (on lattice stretching) and the fact that $N=p^m$ (\ref{eq1})
\begin{equation}\label{eq13}
  \zrf(p \cdot \lattice^d(p^{m-2}), N, p^{m-2}).
\end{equation}

Note that the set $p \cdot \lattice^d(p^{m-1})$ can be divided into $p^d$ subsets $p \cdot \lattice^d(p^{m-2}) + p^{ m-1}\mathbf{a}$, where $\mathbf{a} \in \lattice^d(p)$. One of these subsets is $p \cdot \lattice^d(p^{m-2})$, and the rest are obtained from $p \cdot \lattice^d(p^{m-2})$ by such a shift of all its cells, that each coordinate difference is a multiple of $p^{m-1}$ (Fig.~\ref{fig:p2}).

Since $p^{m-1}$ and the coordinates of all cells from $p \cdot \lattice^d(p^{m-2})$ are divisible by $p$, by Lemma \ref{l10} (on lattice shift) for each set ${p \cdot \lattice^d(p^{m-2}) + p^{m-1}\mathbf{a}}$ the value of $\ncells(p \cdot \lattice^d(p^ {m-2}) + p^{m-1}\mathbf{a}, N, k)$ does not depend on $\mathbf{a}$. Therefore, since one of the sets ${p \cdot \lattice^d(p^{m-2}) + p^{m-1}\mathbf{a}}$ is the set $p \cdot \lattice^d(p ^{m-2})$, and there are $p^d$ such sets, due to (\ref{eq13}) and Lemma \ref{l7} (on multiple lattices)
\begin{equation}\label{eq19}
  \zrf(p \cdot \lattice^d(p^{m-1}), N, p^{m+d-2}).
\end{equation}
Hence, taking into account that $N=p^m$ (\ref{eq1}), if $d \geqslant 2$, by Lemma \ref{l3} (on decreasing of the energy ring parameter)
\begin{equation}\label{eq20}
   \zrf(p \cdot \lattice^d(p^{m-1}), N, N).
\end{equation}

\subsubsection*{The second part of the lattice}

\begin{figure}[H]
	\centering
	\includegraphics[width=1\linewidth]{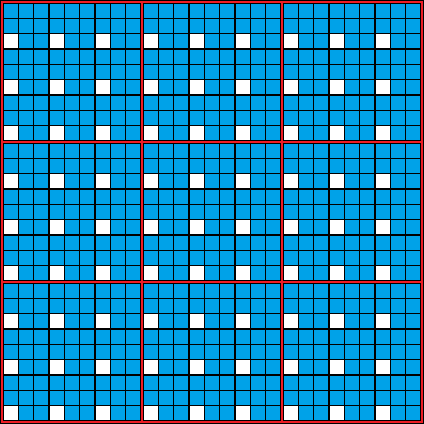}
	\caption{The second part ($\lattice^2(N) - p \cdot \lattice^2(p^{m-1})$) of the $\lattice^2(N)$ (blue cells). The red frames show the partition of ${\lattice^2(N) - p \cdot \lattice^2(p^{m-1})}$ into sets $[{\lattice^2(p^{m-1}) - p \cdot \lattice^2(p^{m-2})] + p^{m-1}\mathbf{a}}$, where $\mathbf{a} \in \lattice^2(p)$.}
	\label{fig:p3}
\end{figure}

From (\ref{eq13}) and by Lemma \ref{l2} (on decreasing of the momentum ring parameter) it follows that
\begin{equation}\label{eq21}
   \zrf(p \cdot \lattice^d(p^{m-2}), p^{m-1}, p^{m-2}).
\end{equation}

From the validity of the hypothesis for $p^{m-1}$ (\ref{eq5}) and by virtue of the Lemma \ref{l3} (on decreasing of the energy ring parameter), it follows that
\begin{equation}\label{eq22}
  \zrf(\lattice^d(p^{m-1}), p^{m-1}, p^{m-2}).
\end{equation}

Due to (\ref{eq21}), (\ref{eq22}) and Lemma \ref{l6} (on subtraction of lattices)
\begin{equation}\label{eq25}
  \zrf(\lattice^d(p^{m-1}) - p \cdot \lattice^d(p^{m-2}), p^{m-1}, p^{m-2}).
\end{equation}

Now we divide the set $\lattice^d(N) - p \cdot \lattice^d(p^{m-1})$ into subsets $[\lattice^d(p^{m-1}) - p \cdot \lattice^d(p^{m-2})] + p^{m-1}\mathbf{a}$, where $\mathbf {a} \in \lattice^d(p)$ (Fig.~\ref{fig:p3}) in the same way as the set $p \cdot \lattice^d(p^{m-1})$ was divided into sets $p \cdot \lattice^d(p^{m-2}) + p^{m-1}\mathbf{a}$, where $\mathbf{a} \in \lattice^d(p)$.

Similarly, by Lemma \ref{l10} (on lattice shift), for each of these sets the values of $\ncells([\lattice^d(p^{m-1}) - p \cdot \lattice^d(p^{m- 2})] + p^{m-1}\mathbf{a}, p^{m-1}, k)$ does not depend on $\mathbf{a}$. 
Therefore, since one of these sets is ${\lattice^d(p^{m-1}) - p \cdot \lattice^d(p^{m-2})}$, and there are $p^d $ such sets, due to (\ref{eq25}) and Lemma \ref{l7} (on multiple lattices)
\begin{equation}\label{eq26}
  \zrf(\lattice^d(N) - p \cdot \lattice^d(p^{m-1}), p^{m-1}, p^{m+d-2}).
\end{equation}
For $d \geqslant 3$, if we succeed in applying Lemma \ref{l8} (on periodicity), then, taking into account that $N=p^m$ (\ref{eq1}), zeroing for the second part of the lattice (\ref {eq12}) will be proven. For $d = 2$ a separate consideration is required.

Now we divide the set $\lattice^d(N) - p \cdot \lattice^d(p^{m-1})$ into disjoint sets of the form
\bean
  p^{m-1} \cdot lattice(p) + (\nu_1, \xi_2, \xi_3, \ldots),\\
  p^{m-1} \cdot lattice_2(p) + (\xi_1, \nu_2, \xi_3, \ldots),\\
  p^{m-1} \cdot lattice_3(p) + (\xi_1, \xi_2, \nu_3, \ldots),\\
  \ldots\qquad\qquad\qquad
\eean
$$
  \nu_i \in \mathbb{Z}({p^{m-1}}),\quad \nu_i \mod{p} \neq 0.
$$
with $p$ cells in each set (Fig.~\ref{fig:p5}) as follows. Let's take an arbitrary cell $\mbox{\boldmath$\xi$}$ from $\lattice^d(N) - p \cdot \lattice^d(p^{m-1})$, take from its $d$ coordinates the first not multiple of $p$ (according to the definition of this set, at least one of the coordinates is not a multiple of $p$), and then we will shift this coordinate by multiples of $p^{m-1}$. Collect the resulting $p$ cells into a set. Regardless of which point of this set we start from, we will collect the same set. So, the partition of $\lattice^d(N) - p \cdot \lattice^d(p^{m-1})$ into subsets is unique.

Suppose we have selected a cell $\boldsymbol\xi$, and $\xi_1$ is the first coordinate not a multiple of $p$. We will shift $\xi_1$ by multiples of $p^{m-1}$. We get the set 
$p^{m-1} \cdot \lattice(p) + (\nu_1, \xi_2, \xi_3, \ldots)$. The elements of the set has the form
\begin{equation}\label{eq27}
   (\xi_1, \xi_2, \xi_3, \ldots)= (\nu_1 + kp^{m-1}, \xi_2, \xi_3, \ldots),
\end{equation}
where $k \in \mathbb{Z}(p)$, $\nu_1 \in \mathbb{Z}({p^{m-1}})$, $\nu_i \mod{p}\ne0$.

The sum of squares of the coordinates (\ref{eq27}) of cells from this set is
\be\label{eq28}
   (\nu_1 + kp^{m-1})^2 + \xi_2^2 + \xi_3^2 + \ldots=
   k^2p^{2m-2} + 2kp^{m-1}\nu_1 + \nu_1^2 + \xi_2^2 + \xi_3^2 + \ldots
\ee

\begin{figure}[H]
	\centering
	\includegraphics[width=1\linewidth]{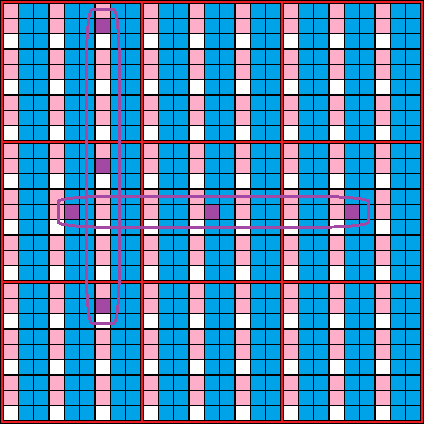}
	\caption{Cells of the the set $\lattice^2(N) - p \cdot \lattice^2(p^{m-1})$ whose first coordinate is not a multiple of $p$ are shown in blue. The pink color shows those cells of this set, for which the first coordinate is a multiple of $p$, and the second coordinate is not a multiple of $p$. (Note: a 3-dimensional set also has cells whose first two coordinates are not a multiple of $p$, and the third coordinate is a multiple of $p$). Purple color shows examples of sets $p^{m-1} \cdot \lattice(p) + (\nu_1, \xi_2)$ (arranged horizontally) and $p^{m-1} \cdot \lattice_2(p) + (\xi_1, \nu_2)$ (arranged vertically).}
	\label{fig:p5}
\end{figure}

Consider the sum \eqref{eq28} modulo $N$. If $m \geqslant 2$, then $p^{2m-2}$ is a multiple of $N=p^m$, 
so the term $k^2p^{2m-2}$ can be dropped:
\begin{equation}\label{eq30}
   (2kp^{m-1}\nu_1 + \nu_1^2 + \xi_2^2 + \xi_3^2 + \ldots) \mod{N}.
\end{equation}

Consider the the remainder of the division of the term $2kp^{m-1}\nu_1$ by $N$ (the other terms of (\ref{eq30}) are constant):
\begin{equation}\label{eq31}
   2kp^{m-1}\nu_1 \mod{N} = p^{m-1}(2k\nu_1 \mod{p}).
\end{equation}
Since $\nu_i \mod{p} \neq 0$ and $p$ are odd primes, $2\nu_1$ and $p$ are coprime. Therefore, by Lemma \ref{l11} (on permutation of remainders), the set of possible values of $2k\nu_1 \mod{p}$ coincides with $\mathbb{Z}(p)$. Therefore, the set of possible values (\ref{eq31}) is the same as $p^{m-1}\mathbb{Z}(p)$.

Therefore, for the set $p^{m-1} \cdot \lattice(p) + (\nu_1, \xi_2, \xi_3, \ldots)$ the function $\ncells(p^{m-1} \cdot \lattice(p) + (\nu_1, \xi_2, \xi_3, \ldots), N,k)$ is periodic in $k$ with period $p^{m-1}$:
\begin{multline}\label{eq38}
   \ncells(p^{m-1} \cdot \lattice(p) + (\nu_1, \xi_2, \xi_3, \ldots), N, k\oplus p^{m-1}) = \\ = \ncells(p^{m-1} \cdot \lattice(p) + (\nu_1, \xi_2, \xi_3, \ldots), N, k).
\end{multline}
The function \eqref{eq38} takes the values $0$ and $1$, and it takes the values $1$ only for $k-(\nu_1^2+\xi_2^2+\xi_3^2+\dots)\in p^{m-1}\mathbb{Z}(p)$.

Since the set $\lattice^d(N) - p \cdot \lattice^d(p^{m-1})$ consists entirely of such disjoint sets
with property \eqref{eq38}, %for each of which this function is periodic with period $p^{m-1 }$, 
and for union of disjoint sets, these functions add up, then for the entire set ${\lattice^d(N) - p \cdot \lattice^d(p^{m-1})}$ this function is periodic in $k$ with period $p ^{m-1}$, that is
\begin{multline}\label{eq39}
  \ncells(\lattice^d(N) - p \cdot \lattice^d(p^{m-1}), N, k \oplus p^{m-1}) = \\ 
  = \ncells(\lattice^d(N) - p \cdot \lattice^d(p^{m-1}), N, k).
\end{multline}
Hence, by virtue of (\ref{eq26}) and Lemma \ref{l8} (on periodicity)
\begin{equation}\label{eq40}
  \zrf(\lattice^d(N) - p \cdot \lattice^d(p^{m-1}), N, p^{m+d-3}).
\end{equation}
If $d \geqslant 3$, then, taking into account that $N=p^m$ (\ref{eq1}), by virtue of Lemma \ref{l3} (on decreasing of the energy ring parameter) it follows from (\ref{eq40}) that
\begin{equation}\label{eq41}
   \zrf(\lattice^d(N) - p \cdot \lattice^d(p^{m-1}), N, N).
\end{equation}
Thus, applying Lemma \ref{l5} (on sum of lattices) to \eqref{eq20} and \eqref{eq41} we have proved the theorem for $N=p^m$, $d\geqslant3$
$$
  \zrf(\lattice^d(N), N, N).
$$

\subsection{Proof for $d\geqslant2$, $N=2^m$}

If we apply the proof for powers of primes for the case $d=2$, $p=2$, then it turns out that the proof for the first part of the lattice works, but the proof for the second part of the lattice is valid up to the formula (\ref{eq26}) before the assumptions $p\not=2$ and $d\geqslant3$ are used. A closer look shows that for $d=p=2$ we need to retreat to the formula \eqref{eq19}.

Now, as an induction hypothesis, we take the validity of the hypothesis not for the previous two powers ($m-1$, $m-2$), but for three ($m-1$, $m-2$, $m-3$):
\begin{equation}\label{eq76}
	\zrf(\lattice^d(p^{m-1}), p^{m-1}, p^{m-1}),
\end{equation}
\begin{equation}\label{eq77}
	\zrf(\lattice^d(p^{m-2}), p^{m-2}, p^{m-2}),
\end{equation}
\begin{equation}\label{eq78}
	\zrf(\lattice^d(p^{m-3}), p^{m-3}, p^{m-3}).
\end{equation}
Here $d=p=2$, but we do not substitute $2$ for these parameters for the convenience of comparison with the previously analyzed cases.

Since we now use the hypothesis for the previous three powers as the induction hypothesis, we can use the proof using the previous two powers by substituting $p^{m-1}$ instead of $N$ (that is, lowering the required powers $p$ by $1$). For $d\geqslant2$ from (\ref{eq19}) by reducing the powers of $p$ by $1$ it follows
\begin{equation}\label{eq79}
	\zrf(p \cdot \lattice^d(p^{m-2}), p^{m-1}, p^{m-1}).
\end{equation}

From the validity of the hypothesis for $p^{m-1}$ (\ref{eq76}), due to (\ref{eq79}) and the Lemma \ref{l6} (on subtraction of lattices)
\begin{equation}\label{eq80}
	\zrf(\lattice^d(p^{m-1}) - p \cdot \lattice^d(p^{m-2}), p^{m-1}, p^{m-1}).
\end{equation}

In the same way as (\ref{eq26}) was obtained from (\ref{eq25}), we obtain \eqref{eq81} from (\ref{eq76}).

Divide the set $\lattice^d(N) - p \cdot \lattice^d(p^{m-1})$ into subsets
$[\lattice^d(p^{m-1}) - p \cdot \lattice^d(p^{m-2})] + p^{m-1}\mathbf{a}$, where $\mathbf{a} \in \lattice^d(p)$ (Fig.~\ref{fig:p3}).

Similarly, by Lemma \ref{l10} (on lattice shift), for each of these sets the values of $\ncells([\lattice^d(p^{m-1}) - p \cdot \lattice^d(p^{m- 2})] + p^{m-1}\mathbf{a}, p^{m-1}, k)$ does not depend on $\mathbf{a}$. Therefore, since one of these sets is ${\lattice^d(p^{m-1}) - p \cdot \lattice^d(p^{m-2})}$, and there are $p^d$ such sets, due to (\ref{eq80}) and Lemma \ref{l7} (on multiple lattices)

\begin{equation}\label{eq81}
	\zrf(\lattice^d(N) - p \cdot \lattice^d(p^{m-1}), p^{m-1}, p^{m+d-1}).
\end{equation}

Now (similarly to derivation of \eqref{eq26}) we need to apply the \ref{l8} Lemma (on periodicity) to increase the momentum ring parameter by decrease the energy ring parameter.

For $p=2$, the set of possible values (\ref{eq31}) no longer coincides with $p^{m-1}\mathbb{Z}(p)$, which was based on the fact that $2\nu_1$ and $p$ are coprime, but this is not true for the case $p=2$.

Therefore, for $p=2$ we will divide into sets of $p$ cells not the set $\lattice^d(N) - p \cdot \lattice^d(p^{m-1})$ as a whole, but each of the $p^d$ subsets
$[\lattice^d(p^{m-1})-p\cdot \lattice^d(p^{m-2})]+p^{m-1}\mathbf{a}$,
where $\mathbf{a} \in \lattice^d(p)$.
So, the coordinate in each of the resulting sets will be shifted by a value that is a multiple of $p^{m-2}$, not $p^{m-1}$.

The set $\lattice^d(N) - p \cdot \lattice^d(p^{m-1})$ is divided into disjoint sets of the form
$$
p^{m-2} \cdot \lattice(p) + (\nu_1, \xi_2, \xi_3, \ldots),
$$
$$
p^{m-2} \cdot \lattice_2(p) + (\xi_1, \nu_2, \xi_3, \ldots),
$$
$$
p^{m-2} \cdot \lattice_3(p) + (\xi_1, \xi_2, \nu_3, \ldots),
$$
$$
\ldots
$$
$$
%\nu_1 \in \mathbb{Z}({p^{m-2}}) \cup (\mathbb{Z}({p^{m-2}})+p^{m-1}),\qquad
\nu_i \in \mathbb{Z}({p^{m-2}})+p^{m-1}\mathbb{Z}(p),\qquad
\nu_i \mod{p}\ne0.
$$
with $p=2$ cells in each.

This partition is performed as follows. Take an arbitrary cell $\boldsymbol{\xi}$ from $\lattice^d(N) - p \cdot \lattice^d(p^{m-1})$,
select from its $d$ coordinates the first non-multiple of $p$ (according to the definition of this set, at least one of the coordinates is not a multiple of $p$), and then we will shift this coordinate by multiples of $p^{m-2}$, so as not to go beyond a specific set 
$[\lattice^d(p^{m-1}) - p \cdot \lattice^d(p^{m-2})] + p^{m-1}\mathbf{a}$ (Fig.~\ref{fig:p6}). 
Collect the resulting $p$ cells into a set. Regardless of which point of this set we start from, we will collect the same set, that is, such a partition of $\lattice^d(N) - p \cdot \lattice^d(p^{m-1})$ into subsets is unique.

\begin{figure}[H]
	\centering
	\includegraphics[width=1\linewidth]{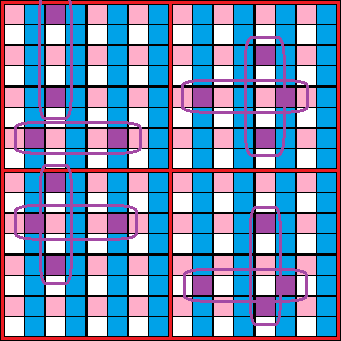}
	\caption{Cells of the $\lattice^2(N) - p \cdot \lattice^2(p^{m-1})$ with first coordinate is not a multiple of $p$ are shown in blue. The pink color shows those cells of this set, for which the first coordinate is a multiple of $p$, and the second coordinate is not a multiple of $p$. (Note: a 3-dimensional set also has cells whose first two coordinates are not a multiple of $p$, and the third coordinate is a multiple of $p$). The set $\lattice^2(N) - p \cdot \lattice^2(p^{m-1})$ is divided into sets $[\lattice^2(p^{m-1}) - p \cdot \lattice^2( p^{m-2})] + p^{m-1}\mathbf{a}$, where $\mathbf{a} \in \lattice^2(p)$ (shown using red frames), each of of which is divided into sets $p^{m-2} \cdot \lattice(p) + (\nu_1, \xi_2)$ (arranged horizontally) and ${p^{m-2} \cdot \lattice_2(p) + (\xi_1, \nu_2)}$ (arranged vertically) examples of which are shown in purple. Here $N=2^4=16$.}
	\label{fig:p6}
\end{figure}

Suppose we have selected a cell $\boldsymbol\xi$, and the first coordinate not a multiple of $p$ is $\xi_1$. We will shift $\xi_1$ by multiples of $p^{m-2}$. We get the set $p^{m-2} \cdot \lattice(p) + (\nu_1, \xi_2, \xi_3, \ldots)$ from the cells
\begin{equation}\label{eq42}
	(\xi_1, \xi_2, \xi_3, \ldots)=(\nu_1 + kp^{m-2}, \xi_2, \xi_3, \ldots),
\end{equation}
where $k \in \mathbb{Z}(p)$,
$\nu_1 \in \mathbb{Z}({p^{m-2}})+p^{m-1}\mathbb{Z}(p),$
% \mathbb{Z}({p^{m-2}}) \cup (\mathbb{Z}({p^{m-2}})+p^{m-1})$,
$\nu_1 \mod{p}\ne0$. %, and there exists $k$ such as

The sum of squares of the coordinates (\ref{eq42}) of cells from this set is
\begin{equation}%\label{eq43}
	(\nu_1 + kp^{m-2})^2 + \xi_2^2 + \xi_3^2 + \ldots
%\end{equation}
%\begin{equation}
\label{eq44}
	=k^2p^{2m-4} + 2kp^{m-2}\nu_1 + \nu_1^2 + \xi_2^2 + \xi_3^2 + \ldots
\end{equation}

Consider this sum of squares modulo $N$. If $m \geqslant 4$, then $p^{2m-4}$ is a multiple of $N$, and the term $k^2p^{2m-4}$ can be dropped:
\begin{equation}\label{eq45}
	(2kp^{m-2}\nu_1 + \nu_1^2 + \xi_2^2 + \xi_3^2 + \ldots) \mod{N}
\end{equation}

Due to the fact that it is possible to pass from the expression (\ref{eq44}) to the expression (\ref{eq45}) only when $m \geqslant 4$, for $p=2$ we have to choose as induction basis not $ m=0,1,2$, but $m=1,2,3$.

Taking into account that $p=2$, so $\nu_1=1$ (since $\nu_1 \mod{p}\ne0$), we rewrite (\ref{eq45}) as
\begin{equation}\label{eq46}
	(kp^{m-1} + 1 + \xi_2^2 + \xi_3^2 + \ldots) \mod{N}.
\end{equation}

Consider the remainder of the division of the term $2kp^{m-1}\nu_1$ by $N$ (the other terms of (\ref{eq46}) are constant):
\begin{equation}\label{eq47}
	kp^{m-1}\in p^{m-1}\mathbb{Z}(p).
\end{equation}

Therefore, for the set $p^{m-2} \cdot \lattice(p) + (\nu_1, \xi_2, \xi_3, \ldots)$ the function $\ncells(p^{m-2} \cdot \lattice(p) + (\nu_1, \xi_2, \xi_3, \ldots), N, k)$ is periodic in $k$ with period $p^{m-1}$:
\begin{multline}\label{eq48}
	\ncells(p^{m-2} \cdot \lattice(p) + (\nu_1, \xi_2, \xi_3, \ldots),N, k \oplus p^{m-1}) = \\ = \ncells(p^{m-2} \cdot \lattice(p) + (\nu_1, \xi_2, \xi_3, \ldots), N, k).
\end{multline}

Since the set $\lattice^d(N) - p \cdot \lattice^d(p^{m-1})$ consists entirely of such disjoint sets, for each of which this function is periodic with period $p^{m-1 }$, and for union of disjoint sets, these functions add up, then for the entire set ${\lattice^d(N) - p \cdot \lattice^d(p^{m-1})}$ this function is periodic in $k$ with period $p ^{m-1}$, that is
\begin{multline}\label{eq49}
	\ncells(\lattice^d(N) - p \cdot \lattice^d(p^{m-1}),N,k\oplus p^{m-1}) = \\ 
   =\ncells(\lattice^d(N) - p \cdot \lattice^d(p^{m-1}),N,k).
\end{multline}

Further, similarly to the statement \eqref{eq40} %, in the case of odd primes $p$ 
we obtain
\begin{equation}\label{eq40-bin}
	\zrf(\lattice^d(N) - p \cdot \lattice^d(p^{m-1}), N, N).
\end{equation}

The induction base ($N=2,4,8$) is verified numerically.

\subsection{Values $\ncells(\lattice^d(N),N,k)$ for $d=1,2$, $N=2,4,8$}
\begin{center}
	\begin{tabular}{|r|c|c|}
		\hline
		$k$ & $0$ & $1$ \\
		\hline
		$\ncells(\lattice^1(2),2,k)$ & $1$ & $1$ \\
		\hline
		$\ncells(\lattice^2(2),2,k)$ & $2$ & $2$ \\
		\hline
	\end{tabular}
\end{center}

\begin{center}
	\begin{tabular}{|r|c|c|c|c|}
		\hline
		$k$ & $0$ & $1$ & $2$ & $3$ \\
		\hline
		$\ncells(\lattice^1(4),4,k)$ & $2$ & $2$ & $0$ & $0$ \\
		\hline
		$\ncells(\lattice^2(4),4,k)$ & $4$ & $8$ & $4$ & $0$ \\
		\hline
	\end{tabular}
\end{center}

\begin{center}
	\begin{tabular}{|r|c|c|c|c|c|c|c|c|}
		\hline
		$k$ & $0$ & $1$ & $2$ & $3$ & $4$ & $5$ & $6$ & $7$ \\
		\hline
		$\ncells(\lattice^1(8),8,k)$ & $2$ & $4$ & $0$ & $0$ & $2$ & $0$ & $0$ & $0$ \\
		\hline
		$\ncells(\lattice^2(8),8,k)$ & $8$ & $16$ & $16$ & $0$ & $8$ & $16$ & $0$ & $0$ \\
		\hline
	\end{tabular}
\end{center}

\subsection{Composite numbers that are not prime powers}

A composite number $N$ can be represented as a product of powers of unequal primes $p_i$ (if $i \neq j$, then $p_i \neq p_j$):
\begin{equation}\label{eq53}
   N=\prod\limits_{i=1}^n p_i^{m_i}.
\end{equation}
Denote
\begin{equation}\label{eq54}
   M=\prod\limits_{i=1}^{n-1} p_i^{m_i},
\end{equation}
\begin{equation}\label{eq55}
   K=p_n^{m_n}.
\end{equation}
That is
\begin{equation}\label{eq56}
   N=MK.
\end{equation}
%We also need the fact that 
$M$ and $K$ are coprime.

We will prove by induction. As a step of induction, we deduce the validity of the hypothesis for $N$ from its validity for $M$ and $K$:
\begin{equation}\label{eq57}
  \zrf(\lattice^d(M), M, M),~\zrf(\lattice^d(K), K, K)~~\Rightarrow~~
  \zrf(\lattice^d(N), N, N).
\end{equation}
As the basis of the induction, we take the validity of the hypothesis for $p_1^{m_1}$:
\begin{equation}\label{eq58}
  \zrf(\lattice^d(p_1^{m_1}), p_1^{m_1}, p_1^{m_1}).
\end{equation}
Since $p_1^{m_1}$ and $K$ are powers of primes, for $d=3$ the base of induction has already been proved and $\zrf(\lattice^d(K), K, K)$ is proved.

Represent the lattice $\lattice^d(N)$ as $K^d$ pairwise disjoint lattices $K \cdot \lattice^d(M) + \mathbf{a}$, where $\mathbf{a} \in \lattice^d (K)$ (Fig.~\ref{fig:p7}).

We prove that for each lattice $K \cdot \lattice^d(M) + \mathbf{a}$
\begin{equation}\label{eq60}
  \zrf(K \cdot \lattice^d(M) + \mathbf{a},N,N)
\end{equation}
after which, by Lemma \ref{l5} (on sum of lattices), it will be proved that
\begin{equation}\label{eq61}
  \zrf(\lattice^d(N),N,N).
\end{equation}

Each lattice coordinate $\lattice^d(M)$ runs through all values in $\mathbb{Z}(M)$. By Lemma \ref{l11} (on permutation of remainders), each coordinate of the lattice $K \cdot \lattice^d(M) + \mathbf{a}$ modulo $M$ also runs through all values in $\mathbb{ Z}(M)$. That is, the lattices $\lattice^d(M)$ and $K \cdot \lattice^d(M) + \mathbf{a}$ consist of the same cells if their coordinates are taken modulo $M$. Therefore, the validity of the hypothesis for $M$ (\ref{eq57}) implies
\begin{equation}\label{eq69}
  \zrf(K \cdot \lattice^d(M) + \mathbf{a}, M, M)
\end{equation}

%Let us prove that the difference of sums of squared coordinates for two cells from $K \cdot \lattice^d(M) + \mathbf{a}$ is a multiple of $K$. 
Consider two cells in $K \cdot \lattice^d(M) + \mathbf{a}$:
$$
  K\boldsymbol\xi + \mathbf{a},
$$
$$
  K\boldsymbol{\xi^\prime} + \mathbf{a},
$$
where $\boldsymbol\xi, \boldsymbol{\xi^\prime} \in \lattice^d(M)$. The difference of the sums of squares of their coordinates:
\bea\label{eq70}
  (K\xi_1+a_1)^2 + (K\xi_2+a_2)^2 + \ldots - (K\xi_1^\prime+a_1)^2 - (K\xi_2^\prime+a_2)^2 - \ldots=\\
\nonumber
  =K^2(\xi_1^2 + \xi_2^2 + \ldots - {\xi_1^\prime}^2 - {\xi_2^\prime}^2 - \ldots) + 2K((\xi_1-\xi_1^\prime)a_1 + (\xi_2-\xi_2^\prime)a_2 + \ldots).
\eea

So, the difference between the sums of squared coordinates of two cells from ${K \cdot \lattice^d(M) + \mathbf{a}}$ is a multiple of $K$.

Compare the functions $\ncells(K \cdot \lattice^d(M) + \mathbf{a}, M, z)$ and 
$\ncells(K \cdot \lattice^d(M) + \mathbf{a}, N, y)$.

Each $z\in\mathbb{Z}(M)$ corresponds to $K$ different values of $y\in\mathbb{Z}(N)$ such that 
$z= y \mod{M}.$

Let $\ncells(K \cdot \lattice^d(M) + \mathbf{a}, M, z_0)=0$, 
then $\ncells(K \cdot \lattice^d(M) + \mathbf{a}, N, y_0)=0$ for all $z_0= y_0 \mod{M}.$

Let $\ncells(K \cdot \lattice^d(M) + \mathbf{a}, M, z_1)\ne0$, then
there exists a non-empty subset of the lattice $K \cdot \lattice^d(M) + \mathbf{a}$ such that for each of its cells the sum of squared coordinates modulo $M$ is equal to $k_1$. That is, in this subset, the sums of squares of cell coordinates differ from each other by multiples of $M$. But we came to the conclusion that these quantities are multiples of $K$ as well. Therefore, they are multiples of the least common multiple of $M$ and $K$. Since $M$ and $K$ are coprime, these quantities are multiples of $N = MK$ (\ref{eq56}). Since $y_1\in\mathbb{Z}(N)$, among all $K$ distinct $y_1\equiv z_1 \mod{M}$ there is one value $\tilde y_1$ for which
\begin{equation}\label{eq72}
  \ncells(K \cdot \lattice^d(M) + \mathbf{a}, N, \tilde y_1)
 =\ncells(K \cdot \lattice^d(M) + \mathbf{a}, M, z_1),
\end{equation}
for the remaining $y_1\ne\tilde y_1$ we get
$\ncells(K \cdot \lattice^d(M) + \mathbf{a}, N, y_1)=0$.

\begin{figure}[H]
	\centering
	\includegraphics[width=1\linewidth]{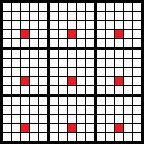}
	\caption{Решётка $\lattice^2(N)$. Здесь для примера $N=15$, $M=3$, ${K=5}$. Красным цветом показан пример множества $K \cdot \lattice^2(M) + \mathbf{a}$, где ${\mathbf{a} \in \lattice^2(K)}$.}
	\label{fig:p7}
\end{figure}

Therefore, due to (\ref{eq69})
\be\label{eq73}
  \zrf(K \cdot \lattice^d(M) + \mathbf{a}, N, M),
\ee
using the lemma \ref{l5} (on sum of lattices) we obtain
\be\label{eq73+}
  \zrf(\lattice^d(N), N, M).
\ee

By permutation of $M$ and $K$, we similarly prove that
\begin{equation}\label{eq74}
  \zrf(M \cdot \lattice^d(K) + \mathbf{a}, N, K),\quad \mathbf{a}\in \lattice^d(M),
\end{equation}
using the lemma \ref{l5} (on sum of lattices) we obtain
\be\label{eq74+}
   \zrf(lattice^d(N), N, K).
\ee

Since $M$ and $K$ are coprime and $N = MK$ (\ref{eq56}), their least common multiple is equal to $N$, so by (\ref{eq73+}) and (\ref{eq74+}) lattice $\lattice^d(N)$ is zerofied modulo $N$:
\begin{equation}\label{eq75}
  \zrf(\lattice^d(N),N,N).
\end{equation}

%Докажем, что модуль $M$ той же координаты в решётке $sl_i$ также пробегает все значения в $\mathbb{Z}(M)$, то есть
%\begin{equation}\label{eq62}
%\left\{(K\xi_1 + a_1) \mod M \right\} = \mathbb{Z}(M)
%\end{equation}
%Это будет верно, если окажется, что в множестве $\left\{(K\xi_1 + a_1) \mod M \right\}$ нет двух одинаковых элементов, так как в этом множестве $M$ элементов, каждый из которых принадлежит $\mathbb{Z}(M)$, и в $\mathbb{Z}(M)$ также $M$ элементов, причём среди них нет двух одинаковых.
%\par
%Поэтому докажем, что в множестве $\left\{(K\xi_1 + a_1) \mod M \right\}$ нет двух одинаковых элементов. Пусть в этом множестве есть два одинаковых элемента
%\begin{equation}\label{eq63}
%(K\xi_1 + a_1) \mod M  = (K\xi_1^\prime + a_1) \mod M
%\end{equation}
%где $\xi_1, \xi_1^\prime \in \mathbb{Z}(M)$. Тогда
%\begin{equation}\label{eq64}
%K\xi_1 + a_1 - K\xi_1^\prime - a_1 = kM
%\end{equation}
%\begin{equation}\label{eq65}
%K(\xi_1 - \xi_1^\prime) = kM
%\end{equation}
%где $k \in \mathbb{Z}$. Попробуем выделить множитель $M$ в левой части \ref{eq65}. Так как $M$ и $K$ взаимно простые, этот множитель содержится в $\xi_1 - %\xi_1^\prime$:
%\begin{equation}\label{eq66}
%\xi_1 - \xi_1^\prime = k_1M
%\end{equation}
%где $k_1 \in \mathbb{Z}$. Так как $|\xi_1 - %\xi_1^\prime|<M$,
%\begin{equation}\label{eq67}
%\xi_1 - \xi_1^\prime = 0
%\end{equation}
%\begin{equation}\label{eq68}
%\xi_1 = \xi_1^\prime
%\end{equation}
%То есть в множестве $\left\{(K\xi_1 + a_1) \mod M \right\}$ нет двух одинаковых элементов, стало быть, оно совпадает с $\mathbb{Z}(M)$.
%\par
%То же самое докажем для двух остальных координат.

\end{document}